\title{Product gales and Finite state dimension}
\author{ Akhil S }
\date{ }
\theoremstyle{definition}
\newtheorem{definition}{Definition}
\newtheorem{lemma}{Lemma}
\newtheorem{theorem}{Theorem}
\newcommand{\dimfs}{\mathrm{dim}_\mathrm{FS}}
\newcommand{\N}{\mathbb{N}}
\newcommand{\Q}{\mathbb{Q}}
\newcommand{\R}{\mathbb{R}}
\newcommand{\F}{\mathcal{F}}
\newcommand{\prefix}{\sqsubseteq}
\newcommand{\mbet}{\mathrm{multi-bet}}
\begin{document}

\maketitle

\begin{abstract}

In this work, we introduce the notion of product gales, which is the modification of an $s$-gale such that $k$ separate bets can be placed at each symbol. The product of the bets placed are taken into the capital function of the product-gale. We show that Hausdorff dimension can be characterised using product gales. 

A $k$-bet finite-state gambler is one that can place $k$ separate bets at each symbol. 
We call the notion of finite-state dimension, characterized by product gales induced by $k$-bet finite-state gamblers, as multi-bet finite-state dimension. Bourke, Hitchcock and Vinodchandran \cite{bourke2005entropy}  gave an equivalent characterisation of finite state dimension by disjoint block entropy rates. 
We show that multi-bet finite state dimension can be characterised using  sliding block entropy rates. Further, we show that multi-bet finite state dimension can also be charatcterised by disjoint block entropy rates. 

Hence we show that finite state dimension and multi-bet finite state dimension are the same notions, thereby giving a new characterisation of finite state dimension using $k$-bet finite state $s$-gales. 
We also provide a proof of equivalence between sliding and disjoint block entropy rates, providing an alternate, automata based proof of  the result by Kozachinskiy, and Shen \cite{KozachinskiyShen2021}.

\end{abstract}

\section{Introduction}

The notion of Hausdorff Dimension introduced by Felix Hausdorff \cite{Hausdorff1919} is a powerful tool in fractal geometry and geometric measure theory. Hausdorff introduced the notion of $s$-dimensional outer measures by generalising lebesgue outer measure  for any value of $s \in [0, \infty)$. Using this notion, Hausdorff generalised the framework of dimension to non integer values. 

Lutz \cite{Lutz03} introduced the notion of $s$-gales, which are betting strategies that generalise martingales and gave a characterisation of Hausdorff dimension using $s$-gales. 
Lutz \cite{Lutz03} further
effectivised the notion of $s$-gales and introduced the notion of constructive Hausdorff dimension. Dai, Lathrop, Lutz, and
Mayordomo \cite{Dai2001} carried forward this effectivisation to the level of finite-state automata.

Dai, Lathrop, Lutz, and
Mayordomo \cite{Dai2001} introduced the notion of finite state dimension using the finite state analogues of $s$-gales. They show that finite state dimension is equivalent to compression rates by finite state automata. Hence they establish that finite state dimension quantifies the information density of a sequence as measured by finite state computation \cite{Dai2001}.  Finite state dimension of a sequence is a real number that lies between $0$ and $1$. Sequences having finite state dimension $0$ are those that are predictable by finite state automata.
Whereas, those with finite state dimension 1 are random to finite state automata.
The concept of finite state dimension also has connections to metric number theory.  A  sequence $X$ drawn from an alphabet $\Sigma$ is said to be normal if every block of length $\ell$ occurs with asymptotic probability $|\Sigma|^{-\ell}$. Due to the result by Schnorr and Stimm \cite{SchSti72}, a sequence $X \in \Sigma^\infty$ is normal if and only if finite state dimension of $X$ is $1$.

Bourke, Hitchcock and Vinodchandran \cite{bourke2005entropy} further explored this connection between finite state dimension of a sequence and the distribution of the substrings within it.
For a fixed length $\ell \in \N$, the limiting (limit infimum) value of Shannon entropy of the distribution of aligned (non intersecting or disjoint) $\ell$-length blocks, over the finite prefixes of the sequence is considered. This value is normalised by dividing by $\ell$ and the infimum of this value taken over all lengths $\ell$ is called the disjoint block entropy rate of the sequence.
Bourke, Hitchcock and Vinodchandran \cite{bourke2005entropy} showed the equivalence between finite state dimension and disjoint block entropy rates, hence bridging the results by Ziv and Lempel \cite{ZivLem78} and Dai, Lathrop, Lutz, and
Mayordomo \cite{Dai2001}.

S. S. Pillai \cite{Pillai1940} showed that the notion of normality remains unchanged weather the occurence of any blocks $w \in \Sigma^\ell$ are counted in an aligned (disjoint) fashion, or in a non-aligned (sliding, overlapping) fashion.
Kozachinskiy and Shen \cite{KozachinskiyShen2021} generalised the result by Pillai showing that the notions of sliding and disjoint block entropy rates are also equivalent, using information theoretic techniques.

In this work, we introduce the notion of product gales. A $k$-product $s$-gale is a product of $k$ different $s$-gales. It is fundamentally different from the notion of $k$-account $s$-gales \cite{Dai2001} as the product of the $k$-accounts are taken instead of the sum.
Therefore, a product gale need not necessarily be effectively an $s$-gale. We show that Hausdorff dimension can be characterised using product-gales, generalising the result by Lutz \cite{Lutz03}. 

We then introduce the notion of $k$-bet finite state gamblers, in which we allow the gambler to place $k$ separate bets on each symbol in the sequence. The resultant $k$-product $s$-gale is obtained by multiplying together the $s$-gales induced by the $k$ bets. We define a notion of multi-bet finite state dimension using $k$-bet $s$-gales. We show that this notion perfectly captures the process of betting on a sequence in sliding block fashion. 
We show that multi-bet finite state dimension is equivalent to the sliding block entropy rates. 

We then show that  multi-bet finite state dimension is equivalent to disjoint block entropy rates.
Hence we obtain that the notion of  multi-bet finite state dimension is equivalent to ordinary finite state dimension. Thereby, we obtain a new characterisation of finite state dimension. We also obtain that the notions of sliding and disjoint block entropy rates are equivalent, thereby giving a new automata based proof of Kozachinskiy and Shen \cite{KozachinskiyShen2021}.

\section{Preliminaries}

\subsection{Notation}
$\Sigma$ denotes a finite alphabet. $\Sigma^ \ell$ denotes strings of size $\ell$ using $\Sigma$.  $\Sigma^\infty$ denotes infinite strings and $\Sigma^*$ denotes the set of finite strings formed using $\Sigma$.   
$\Sigma^{<\ell}$ denotes all strings of $\Sigma$ of size less than $\ell$, including the empty string $\lambda$. 

Given $X \in \Sigma^\infty$, $X \restriction n$ denotes the string formed by the first $n$ symbols of $X$. For a finite string $x \in \Sigma^*$, $x[a:b]$ denotes the string $x$ from index $a$ to $b-1$.
We use the same notion for finite strings as well. $\log$ denotes the logarithm to the base 2.

Given $w \in \Sigma^*$, the cylinder set of $w$, denoted $C_w$ is the set of strings in $\Sigma^\infty$ for which $w$ appears as a prefix.

$\N$ denotes the set of natural numbers, $\Q$ denotes the set of rational numbers and $\R$ denotes the set of real numbers. $\prefix$ denotes the prefix of operation, which includes equality.

\subsection{Hausdorff Dimension over Cantor Space} \label{sec:HausdorffDimension}

The following definitions are originally given by Hausdorff \cite{Hausdorff1919}. We take the adaptation of the following for the Cantor space given by Lutz \cite{Lutz03}. 

\begin{definition}  [Hausdorff \cite{Hausdorff1919}, Lutz \cite{Lutz03}]
	Given a set $\F \subseteq \Sigma^\infty$, a 
	collection of strings $\{w_i\}_{i \in \N}$ where for each $i \in \N$, $w_i \in \Sigma^*$  is
	called a \emph{$\delta$-cover of $\F$} if for all
	$i \in \N$, $2^{-|w_i|} \leq \delta$ and $\F \subseteq
	\bigcup_{i\in\N}{C_{w_i}}$.
\end{definition}

\begin{definition}  [Hausdorff \cite{Hausdorff1919}, Lutz \cite{Lutz03}]
	Given an $\F \subseteq \Sigma^\infty$, for any $s>0$, define
	$$
	\mathcal{H}^s_\delta(\F) = \inf\left\{\sum_i 2^{-s.|w_i|} :
	\{w_i\}_{i\in\N} \text{ is a } \delta\text{-cover of } \F
	\right\}.
	$$
\end{definition}

As $\delta$ decreases, the set of admissible $\delta$ covers decreases.
Hence $\mathcal{H}^s_\delta(\F)$ increases.

\begin{definition}  [Hausdorff \cite{Hausdorff1919}, Lutz \cite{Lutz03}]
	For $s \in (0,\infty)$, the \emph{s-dimensional Hausdorff outer 
		measure} of $\F$ is defined as:
	$$\mathcal{H}^s(\F) = \lim\limits_{\delta \rightarrow 0}
	\; \mathcal{H}^s_\delta(\F).$$
\end{definition}


Observe that for
any $t>s$, if $\mathcal{H}^s(\F) < \infty$, then
$\mathcal{H}^t(\F) = 0$ (see Section 2.2 in \cite{Falc03}).

Finally, we have the following definition of Hausdorff dimension over $\Sigma^\infty$.

\begin{definition} [Hausdorff \cite{Hausdorff1919}, Lutz \cite{Lutz03}]
	For any $\F \subset \Sigma^\infty$, the \emph{Hausdorff dimension} of $\F$ is defined as:
	$$\dim(\F) = \inf \{s\geq 0 : \mathcal{H}^s(\F) = 0 \}
	.$$
\end{definition}

\subsection{$s$-gales}

\begin{definition} [$s$-gale \cite{Lutz03}] \label{def:s-gale}
	An s-gale is a function $d :\{0,1\}^* \rightarrow [0,\infty)$, that satisfies the following condition:
	\[d(w) = 2^{-s} [d(w0) + d(w1)]\]
	for all $w \in \{0,1\}^*$.
\end{definition}

\begin{definition} [Success set of $s$-gale \cite{Lutz03}]
	Let $d$ be an s-gale.
	\begin{enumerate}
		\item $d$ is said to succeed on a sequence $X \in \Sigma^\infty$ if:
		
		\[\limsup_{n \rightarrow \infty} d(X \restriction n ) = \infty.\]
		\item The success set of $d$ is
		
		\[S^\infty[d] = \{X \in \Sigma^\infty : d \text{ succeeds on } X\}.\]
	\end{enumerate}
\end{definition}

\subsection{Finite-state Dimension}

\begin{definition}[Finite-State Gambler \cite{Dai2001}]
	A finite-state gambler (FSG) is a 5-tuple
	\[
	G = (Q, \delta, \beta, q_0, c_0),
	\]
	where
	\begin{itemize}
		\item \( Q \) is a nonempty, finite set of states,
		\item \( \delta : Q \times \{0, 1\} \to Q \) is the transition function,
		\item \( \beta : Q \to \Q \cap [0,1] \) is the betting function,
		\item \( q_0 \in Q \) is the initial state, and
		\item \(c_0\), is the initial capital, a nonnegative rational number.
	\end{itemize}
\end{definition}

\begin{definition} [$s$-gale induced by FSG \cite{Dai2001}]
Given $s \in [0,\infty)$, the $s$-gale induced by the FSG \( G \) is the function 
	\[
	d_{G}^s : \{0, 1\}^* \to [0, \infty)
	\]
	defined by the recursion:
	\[
	d_{G}^s(\lambda) = c_0,
	\]
	and
	\[
		d_{G}^s(wb) = 2^s \cdot 	d_{G}^s (w) \cdot \left[(1 - b) \cdot (1 - \beta(w)) + b\cdot\beta(w)\right].
	\]
\end{definition}

\begin{definition}
	Let $\mathcal{G}_{FS}(X)$ denote the set of all $s \in[0,1]$ such that there is a finite state s-gale $d^s_G$ for which $X \in S^\infty[d_G^s]$.
\end{definition}

\begin{definition} [Finite state dimension \cite{Dai2001}]\label{def:FSD}
	 The Finite state dimension of a sequence $X \in \Sigma^\infty$ is
	\[\dimfs(X) = \inf \mathcal{G}_{FS}(X).\]
\end{definition}

\subsection{Disjoint Block Entropy rates}

\begin{definition}
	Given $x \in
	\Sigma^{\ell\cdot k}$  and $u 
	\in \Sigma^\ell$, let
	\begin{align*}
		N(u,x) = \lvert\{0\leq i < k \; : \;
		x[i\ell : (i+1)\ell] = u \}\lvert.
	\end{align*}
\end{definition}

Therefore, $N(u,x)$ is the number of times $u$ occurs in $\ell$ length blocks of
$x$ in a sliding manner.

\begin{definition}
	The \emph{disjoint block frequency} of $u 
	\in \Sigma^\ell$
	in  $x \in \Sigma^{\ell\cdot k}$  and  is defined as :
	\begin{align*}
		P(u , x) = \frac{N(u,x)}{k}.
	\end{align*}
\end{definition}

It follows
that $\sum\limits_{u \in \Sigma^\ell} P(u ,x)= 1$.

\begin{definition}
	For a finite sequence $x \in \Sigma^{\ell \cdot k}$ , the \emph{$\ell$-length disjoint block entropy rate} of $x$ is defined as:
	\begin{align*}
		H_\ell^{disjoint}(x) = - \frac{1}{\ell \; \log(|\Sigma|)}
		 \sum_{u \in
			\Sigma^\ell} P(u ,x ) \; \log(P(u ,x )).
	\end{align*}
	
\end{definition}

\begin{definition} [Disjoint block Entropy rate \cite{bourke2005entropy}]
	For $X \in \Sigma^\infty $, the
	$\ell$-length \emph{disjoint block entropy rate} of $X$ is
	defined as:
	\begin{align*}
		H_\ell^{disjoint}(X) = 
		\liminf\limits_{k\rightarrow \infty} H_\ell^{disjoint}(X \restriction k\ell).
	\end{align*}	
	The \emph{ disjoint block entropy rate} of $X$ is defined
	as\footnote[1]{It holds that the $\inf\limits_{\ell}$ can be
		replaced with a stronger $\lim\limits_{\ell \rightarrow \infty}$ for
		  block entropy rate \cite{KozachinskiyShen2019}.} :
	\begin{align*}
		H^{disjoint}(X) = \inf\limits_{\ell \in \N} H_\ell^{disjoint}(X).
	\end{align*}	 
\end{definition}

\subsection{Sliding block entropy rates}

Sliding block entropy rate is an analogous notion of disjoint block entropy rates. The key difference is that the counting of blocks is done is a sliding (non aligned) manner. 

\begin{definition}
Given $x \in
\Sigma^{n}$  and $u 
\in \Sigma^\ell$,  such that $n \geq \ell$, let
\begin{align*}
	N(u,x) = \lvert\{0\leq i < n - \ell \; : x[i : i + \ell] 
	= u \}\lvert.
\end{align*}
\end{definition}

Therefore, $N(u,x)$ is the number of times $u$ occurs in $\ell$ length blocks of
$x$ in a sliding manner.

\begin{definition}
	The \emph{sliding block frequency} of $u 
	\in \Sigma^\ell$
	in  $x \in \Sigma^{n}$  and  is defined as :
	\begin{align*}
		P(u , x) = \frac{N(u,x)}{n - \ell + 1}.
	\end{align*}
\end{definition}

It follows
that $\sum\limits_{u \in \Sigma^\ell} P(u ,x)= 1$.

\medskip

\begin{definition}
	For a finite sequence $x \in \Sigma^{n}$ , the \emph{$\ell$-length sliding block entropy rate} of $x$ is defined as:
	\begin{align*}
		H_\ell^{sliding}(x) = - \frac{1}{\ell \; \log(|\Sigma|)}
		\sum_{u \in
			\Sigma^\ell} P(u ,x ) \; \log(P(u ,x )).
	\end{align*}
	
\end{definition}

\begin{definition} [Sliding block entropy rates \cite{KozachinskiyShen2019}]
	For $X \in \Sigma^\infty $, the
	$\ell$-length \emph{sliding block entropy rate} of $X$ is
	defined as:
	\begin{align*}
		H_\ell^{sliding}(X) = 
		\liminf\limits_{n\rightarrow \infty} H_\ell^{sliding}(X \restriction n).
	\end{align*}	
	The \emph{sliding block entropy rate} of $X$  is defined
	as\footnote[1]{It holds that the $\inf\limits_{\ell}$ can be
		replaced with a stronger $\lim\limits_{\ell \rightarrow \infty}$ for
		block entropy rate\cite{KozachinskiyShen2019}.}  :
	\begin{align*}
		H^{sliding}(X) = \inf\limits_{\ell \in \N} H_\ell^{sliding}(X).
	\end{align*}	 
\end{definition}

\subsection{Block Entropy rates and  Finite state Dimension}
Bourke, Hitchcock and Vinodchandran \cite{bourke2005entropy} show the following disjoint block entropy characterisation of finite state dimension. We give an alternate automata based proof. It is the simplification of the proof given in Akhil, Nandakumar and Pulari \cite{fsrdconf}. 

First we prove a lemma that gives a probability distribution $\mathbb{P}$ from the block entropy rate, which we use in the proof of Theorem \ref{thm:DisjEquiv}. 

\begin{lemma} 
	\label{lem3.1} 
	For all $X \in \Sigma^\infty$ and $ s' > H^{disjoint}(X)$, there exists a block length  $\ell \in \mathbb{N}$, and a probability distribution function $\mathbb{P} : \{0,1\}^\ell \rightarrow \Q \cap [0,1]$, such that for infinitely many $k \in \mathbb{N}$,
	
	
	\[  - \sum_{w \in \{0,1\}^\ell}  P(w,X \restriction k\ell) \; \log(\mathbb{P}(w))) < \ell \; (s' - d)\]
	
	for some $d < s'$.
\end{lemma}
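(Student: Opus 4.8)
The plan is to unwind the definitions, pass to a limit of empirical block-frequency distributions via compactness of the probability simplex, use continuity of Shannon entropy, and finally perturb slightly towards the uniform distribution to obtain simultaneously rationality and full support. Concretely: since $H^{disjoint}(X) = \inf_\ell H_\ell^{disjoint}(X) < s'$, I fix a block length $\ell$ with $H_\ell^{disjoint}(X) < s'$ and choose reals $\gamma, \gamma'$ with $H_\ell^{disjoint}(X) < \gamma < \gamma' < s'$. Because $H_\ell^{disjoint}(X) = \liminf_k H_\ell^{disjoint}(X\restriction k\ell)$, the set $K = \{k \in \N : H_\ell^{disjoint}(X\restriction k\ell) < \gamma\}$ is infinite; writing $P_k$ for the disjoint block frequency distribution $P(\cdot, X\restriction k\ell)$ on $\{0,1\}^\ell$ and $H(P) = -\sum_w P(w)\log P(w)$ for the unnormalised Shannon entropy, this says $H(P_k) < \ell\gamma$ for every $k \in K$ (here $\log|\Sigma| = 1$).

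Next, since each $P_k$ lies in the compact simplex of probability distributions on $\{0,1\}^\ell$, I extract an infinite subset $K' \subseteq K$ along which $P_k \to P^\ast$ for some distribution $P^\ast$. Shannon entropy is continuous on this simplex (using $t\log t \to 0$ as $t\to 0^+$), so $H(P^\ast) = \lim_{k\in K'} H(P_k) \le \ell\gamma < \ell\gamma'$. I then take $\mathbb{P}$ to be a rational, full-support perturbation of $P^\ast$: pick a small rational $\eta > 0$ and a rational distribution $R$ with $\|R - P^\ast\|_1$ small, and set $\mathbb{P} = (1-\eta)R + \eta U$ with $U$ uniform on $\{0,1\}^\ell$, so that $\mathbb{P}$ is $\Q$-valued, sums to $1$, and $\mathbb{P}(w) \ge \eta 2^{-\ell} > 0$ for all $w$. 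As $-\sum_w P^\ast(w)\log\mathbb{P}(w)$ depends continuously on $\mathbb{P}$ among full-support distributions and equals $H(P^\ast)$ at $\mathbb{P} = P^\ast$, a small enough choice of $\eta$ and $\|R - P^\ast\|_1$ yields $-\sum_w P^\ast(w)\log\mathbb{P}(w) < \ell\gamma'$. Finally, with this $\mathbb{P}$ fixed and bounded away from $0$, the map $P \mapsto -\sum_w P(w)\log\mathbb{P}(w)$ is linear hence continuous in $P$, so for all sufficiently large $k \in K'$ we get $-\sum_w P(w, X\restriction k\ell)\log\mathbb{P}(w) < \ell\gamma' = \ell(s' - d)$ with $d := s' - \gamma' \in (0, s')$; since the tail of $K'$ is infinite this holds for infinitely many $k$, which is the claim.

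The main subtlety is that the empirical distributions $P_k$ vary with $k$, so one cannot simply take $\mathbb{P} = P_k$ (that would turn the sum into the Shannon entropy of $P_k$ rather than a cross-entropy against a fixed distribution); the compactness step is precisely what produces a single $\mathbb{P}$ good for infinitely many $k$ at once. A secondary wrinkle is that the limit $P^\ast$ may lie on the boundary of the simplex (some coordinates zero) and need not be rational, which is why the convex combination with $U$ is needed — both to keep the cross-entropy finite and continuous and to obtain rationality. The only real care required is accounting for the strict inequalities: one spends a little of the gap $\gamma' - \gamma$ passing from $H(P^\ast)$ to $-\sum_w P^\ast(w)\log\mathbb{P}(w)$, and a little more passing from $P^\ast$ to $P_k$, but since $\gamma < \gamma' < s'$ leaves genuine room, both perturbations can be chosen small enough.
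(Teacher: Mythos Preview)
Your proof is correct and follows essentially the same approach as the paper: pick $\ell$ using the definition of $H^{disjoint}$, extract a subsequential limit $P^\ast$ of the empirical block-frequency distributions by compactness of the simplex, use continuity of entropy to bound $H(P^\ast)$, and perturb to a rational distribution while controlling the cross-entropy. Your explicit mixing with the uniform distribution to guarantee full support (so that $\log\mathbb{P}(w)$ is finite for all $w$) is in fact slightly more careful than the paper's treatment of this point.
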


\begin{proof}
	Since, $\inf\limits_{l} H_l(X) = s$, for any $ s'>s, \exists \; \ell \; \text{ s.t }  H_\ell(X) < s'$.
	
	At that $\ell$, let $s' - H_\ell(X) = 3d$. So, $ \liminf\limits_{k\rightarrow \infty} H_\ell(X|k\ell) = s' - 3d$.
	
	\medskip
	
	There exists an infinite number of $k \in \N$ such that $H_\ell(X|k\ell) < s' - 2d$. 
	
	For each such $k$ we have,  
	\begin{align} \label{eq:1}
		- \frac{1}{\ell}  \sum_{w \in \{0,1\}^\ell}  P(w,X \restriction k\ell) \; \log(P(w,X\restriction k\ell)) < s'- 2d.
	\end{align} 
	
	Since $0 \leq  P(w,X \restriction k\ell) \leq 1$ is bounded, there exists a convergent subsequence, corresponding to indices $k' \in \mathbb{N}$ and let $\{\mathbb{P}(w)\}_{w \in \Sigma^\ell} $ be the limit of the subsequence. 
	
	From (\ref{eq:1}) and continuity of entropy function, it follows that,
	
	\begin{align} \label{eq:2}
		- \frac{1}{\ell} \sum_{w \in \{0,1\}^\ell} \mathbb{P}(w) \log(\mathbb{P}(w)) < s'-2d.
	\end{align}
	
	For all $\epsilon > 0$, there exists a $k_\epsilon$ such that for all such $k' > k_\epsilon$,
	\[P(w,X\restriction k'\ell) < \mathbb{P}(w) + \epsilon. \]
	
	So for infinitely many $k' \in \N$ ,
	
	\begin{align*}
		- \frac{1}{\ell} \sum_{w \in \{0,1\}^\ell}   P(w,X\restriction k'\ell) \;& \log(\mathbb{P}(w)) < 
		\frac{1}{\ell}  \sum_{w \in \{0,1\}^\ell}  (\mathbb{P}(w) + \epsilon) \; \log(1/\mathbb{P}(w))\\
		&= - \frac{1}{\ell} \sum_{w \in \{0,1\}^\ell}  \mathbb{P}(w)  \log(\mathbb{P}(w)) + \frac{\epsilon}{\ell}  \sum_{w \in \{0,1\}^\ell} \log(1/\mathbb{P}(w)).
	\end{align*}
	
	Applying (\ref{eq:2}), for infinitely many $k' \in \N$, \[ - \frac{1}{\ell}. \sum_{w \in \{0,1\}^\ell}  P(w,X \restriction k'\ell) \; \log(\mathbb{P}(w)) < s' -2d + \frac{\epsilon}{\ell}  \sum_{w \in \{0,1\}^\ell} \log(1/\mathbb{P}(w)).\]
	
	Given $\epsilon' > 0$, let $\mathbb{P'}(w): \Sigma^\ell \to \Q \cap [0,1]$ be a probability distribution such that forall $w \in \Sigma^\ell$, $|\log(\mathbb{P'}(w)) - \log(\mathbb{P}(w))| < \epsilon'$.
	
	Taking small enough $\epsilon$, $\epsilon'$, for infinitely many $k' \in \N$,
	
	\[- \frac{1}{\ell}. \sum_{w \in \{0,1\}^\ell}  P(w,X \restriction k'\ell) \; \log(\mathbb{P'}(w)) <  s' - d.\]
\end{proof}

For any $ s > H^{disjoint}(X)$, we use $\mathbb{P}(w)$'s obtained from Lemma \ref{lem3.1} to construct an $s-gale$ that bets on strings of length $\ell$ and succeeds on $X$.

\begin{lemma} 
	\label{lem3.3} 
	
	For all $X \in \Sigma^\infty$,
	
	\[\dimfs(X) \leq H^{disjoint}(X) \]
\end{lemma}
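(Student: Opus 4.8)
Proof plan for Lemma \ref{lem3.3} ($\dimfs(X) \le H^{disjoint}(X)$):

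\noindent\textbf{Proof idea.} The plan is to show that for every $s > H^{disjoint}(X)$ there is a finite-state $s$-gale that succeeds on $X$; since $\dimfs(X) = \inf\mathcal{G}_{FS}(X)$, this immediately gives $\dimfs(X) \le H^{disjoint}(X)$. So I would fix $s > H^{disjoint}(X)$ and choose a rational $s'$ with $H^{disjoint}(X) < s' < s$. Applying Lemma~\ref{lem3.1} to this $s'$ yields a block length $\ell$, a rational probability distribution $\mathbb{P}$ on $\{0,1\}^\ell$, and a constant $d > 0$ (positive, as is evident from the proof of Lemma~\ref{lem3.1}) such that for infinitely many $k$,
\[
- \sum_{w \in \{0,1\}^\ell} P(w, X\restriction k\ell)\,\log(\mathbb{P}(w)) < \ell(s' - d).
\]

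Next I would build the FSG $G$ that ``bets the distribution $\mathbb{P}$ on successive disjoint blocks of length $\ell$''. Its state set is the set of proper prefixes of length-$\ell$ strings (a complete binary tree of depth $\ell$, hence finitely many states), with $q_0 = \lambda$; reading a bit $b$ moves from state $u$ to $ub$ when $|ub| < \ell$ and resets to $\lambda$ when $|ub| = \ell$; and the betting function at state $u$ with $|u| < \ell$ is $\beta(u) = \mathbb{P}(\{v : u1 \prefix v\}) / \mathbb{P}(\{v : u \prefix v\})$, the conditional probability under $\mathbb{P}$ that the next block symbol is $1$. After first mixing $\mathbb{P}$ with the uniform distribution on $\{0,1\}^\ell$ by an arbitrarily small rational amount (which keeps $\beta$ rational, valued in $(0,1)$, and changes the sum in Lemma~\ref{lem3.1} by an arbitrarily small quantity, so the displayed inequality survives after slightly shrinking $d$), $d_G^s$ is a genuine finite-state $s$-gale. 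Telescoping the per-symbol bets within a block shows its capital after a whole number of blocks factorizes as
\[
d_G^s(X\restriction k\ell) = c_0 \cdot 2^{sk\ell} \prod_{i=0}^{k-1} \mathbb{P}\big(X[i\ell : (i+1)\ell]\big).
\]

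Then I would take logarithms and regroup the product by distinct blocks using $P(w, X\restriction k\ell) = N(w, X\restriction k\ell)/k$, obtaining
\[
\log d_G^s(X\restriction k\ell) = \log c_0 + sk\ell + k\sum_{w \in \{0,1\}^\ell} P(w, X\restriction k\ell)\,\log(\mathbb{P}(w)).
\]
For the infinitely many $k$ furnished by Lemma~\ref{lem3.1} the last sum exceeds $-\ell(s'-d)$, so $\log d_G^s(X\restriction k\ell) > \log c_0 + k\ell\,(s - s' + d)$. Since $s - s' > 0$ and $d > 0$, the right-hand side tends to $\infty$ along this subsequence, so $\limsup_n d_G^s(X\restriction n) = \infty$ and $X \in S^\infty[d_G^s]$. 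As $s > H^{disjoint}(X)$ was arbitrary, $\dimfs(X) \le H^{disjoint}(X)$.

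The step I expect to require the most care is the passage from the abstract block distribution $\mathbb{P}$ to an honest FSG betting function: the conditional bets must lie in $\Q \cap [0,1]$ and must avoid the values $0$ and $1$, since a single ``forbidden'' block would zero the capital permanently; the small perturbation toward the uniform distribution is exactly what secures this while leaving the entropy estimate of Lemma~\ref{lem3.1} essentially intact. Everything else — the telescoping of an $\ell$-block bet into $\ell$ single-symbol bets, and the observation that a $\limsup$ along the subsequence $n = k\ell$ already witnesses success — is routine bookkeeping.
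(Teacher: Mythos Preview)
Your proposal is correct and follows essentially the same route as the paper's proof: invoke Lemma~\ref{lem3.1} to obtain a block length $\ell$ and a rational distribution $\mathbb{P}$ on $\Sigma^\ell$, build the depth-$\ell$ tree gambler whose conditional bets realise $\mathbb{P}$ on disjoint blocks, express the capital at indices $k\ell$ as $2^{sk\ell}\prod_i \mathbb{P}(w_i)^{n_i}$, take logarithms, and use the cross-entropy bound from Lemma~\ref{lem3.1} to force the capital to diverge along a subsequence. The only differences are cosmetic refinements on your side: you interpose an auxiliary rational $s'$ between $H^{disjoint}(X)$ and $s$ (the paper applies Lemma~\ref{lem3.1} directly at $s$), and you explicitly perturb $\mathbb{P}$ toward the uniform distribution to guarantee strictly positive conditional bets, a point the paper leaves implicit.
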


\begin{proof}
	For any $s >H^{disjoint}(X)$, consider the probability distribution distribution $\mathbb{P}(w) : \Sigma^\ell \to [0,1] \cap \Q$ obtained from Lemma \ref{lem3.1}.
	
	Consider the following finite state gambler,
	$G_{\mathbb{P}} = (Q,\delta,\beta,q_0)$, whose components are :
	
	\begin{itemize}
		\item $Q = \Sigma^{< \ell}$.
		\item For all $w \in Q$ and $b \in \Sigma,$
		
		\[
		\delta(w,b) = 
		\begin{cases}
			wb,& \text {if } |w| <  \ell-1\\
			\lambda, & \text {if } |w| =  \ell-1\\
		\end{cases}
		\]
		
		\item The betting is done as to ensure that the cumulative bet $\beta (w) = \mathbb{P}(w)$ for all $w \in \Sigma^\ell$.
		
		\begin{align*}
			\beta(w) &= \frac{\mathbb{P}(w.1)}{\mathbb{P}(w)}.
		\end{align*}
		
		where for $v \in \Sigma^{< \ell}$, $\mathbb{P}(v) = \sum\limits_{w \in \Sigma^\ell : v \prefix w} \mathbb{P}(w)$.
		
		\item $q_0 = \lambda$.
		
	\end{itemize}

	Consider the $s$-gale induced by $G_{\mathbb{P}}$. If disjoint block $w_i \in \Sigma^\ell$ has appeared $n_i$ times in $X \restriction k\ell$, then
	
	\[d_{G_{\mathbb{P}}}^{s}(X \restriction k\ell) = \prod_{w_i \in  \Sigma^\ell} (2^{s.\ell} * \mathbb{P}(w_i)) ^{n_i}\]
	
	Taking logarithm,
	
	\begin{align*}
		\log(d_{G_{\mathbb{P}}}^{s}(X \restriction k\ell))  &= \sum_{w_i \in  \{0,1\}^l} {n_i} \;(s\cdot \ell  + \log (\mathbb{P}(w_i)) ) \\
		&= k . \sum_{w_i \in  \{0,1\}^l} \frac{n_i}{k} \; (s\cdot \ell  + \log (\mathbb{P}(w_i)) )\\ 
		&= k . \sum_{w_i \in  \{0,1\}^l} P(w_i,X \restriction k\ell) \; (s\ell + \log(\mathbb{P}(w_i)))\\
		&=  k . \left(s\ell - \sum_{w_i \in  \{0,1\}^l} P(w_i,X\restriction k\ell) \; \log(1/{\mathbb{P}(w_i)})\right).
	\end{align*}
	
	From Lemma \ref{lem3.1}, we have for some $d < s$, for infinitely many $k \in \N$,
	
	\begin{align*}
		\log(d_{G_{\mathbb{P}}}^{s}(X \restriction k\ell)) \;\; 
		&\geq  k . (s\ell - \ell(s-d) )\\
		&=  k.d.\ell.
	\end{align*}
	
	Since $k$ is unbounded, we see that the $s$-gale induced by $G_{\mathbb{P}}$ succeeds on X.
	
	Hence, for any $s >H^{disjoint}(X)$, $\dimfs(X) \leq s$. 
	
	From this, it follows that for all $X \in \Sigma^\infty$, $H^{disjoint}(X) \geq \dimfs(X)$.
	
\end{proof}

We now bound the entropy rates of a sequence $X$ from the win of an $s$-gale on it. 

\begin{lemma}
	\label{lem3.6}
	
	For every $X \in \Sigma^\infty$,
	
	\[ H^{disjoint}(X)  \leq \dimfs(X). \]
\end{lemma}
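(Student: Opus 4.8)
The plan is to establish the reverse inequality $H^{disjoint}(X) \le \dimfs(X)$ by showing that any finite-state $s$-gale that succeeds on $X$ forces the $\ell$-length disjoint block entropy rate of $X$ to be at most $s$ for every block length $\ell$ that is a multiple of the period structure of the gambler. Fix $s > \dimfs(X)$ and a finite-state gambler $G = (Q,\delta,\beta,q_0,c_0)$ whose induced $s$-gale $d_G^s$ succeeds on $X$, so $\limsup_n d_G^s(X\restriction n) = \infty$. The key is to convert the multiplicative growth of the gale into an upper bound on an entropy expression. Writing $\log d_G^s(X\restriction m\ell)$ as a telescoping sum over the $m$ disjoint $\ell$-blocks of $X\restriction m\ell$, each block's contribution is $s\ell + \log(\text{product of the conditional bets made while reading that block})$, where the bets depend on the state of $G$ entering the block.

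The main obstacle, and the reason this direction is more delicate than Lemma \ref{lem3.3}, is that the bet placed on a block depends on the state $q$ in which $G$ enters that block, so the gambler does not simply induce a single probability distribution on $\Sigma^\ell$ but rather one distribution $\mathbb{P}_q$ per state. First I would handle this by a standard averaging/state-frequency argument: group the $m$ blocks according to the entering state, let $m_q$ be the number of blocks entered in state $q$ and $x_q \in (\Sigma^\ell)^{m_q}$ the corresponding subsequence of blocks, so that $\log d_G^s(X\restriction m\ell) = \sum_q \big(m_q s\ell - m_q \sum_{w} P(w,x_q)\log(1/\mathbb{P}_q(w))\big)$ up to the bounded additive contribution of the final partial block and $\log c_0$. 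Using $\sum_w P(w,x_q)\log(1/\mathbb{P}_q(w)) \ge -\sum_w P(w,x_q)\log P(w,x_q) = \ell\log|\Sigma| \cdot H_\ell^{sliding}\text{-type term}$ — more precisely, by Gibbs' inequality (non-negativity of KL divergence) the cross-entropy against $\mathbb{P}_q$ is at least the Shannon entropy of the empirical distribution $P(\cdot,x_q)$ — one gets $\log d_G^s(X\restriction m\ell) \le \sum_q m_q\big(s\ell - (-\sum_w P(w,x_q)\log P(w,x_q))\big)$.

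Next I would use convexity (concavity of Shannon entropy) to combine the per-state empirical distributions into the global empirical distribution $P(\cdot, X\restriction m\ell) = \sum_q (m_q/m) P(\cdot,x_q)$: since entropy is concave, $-\sum_w P(w,X\restriction m\ell)\log P(w,X\restriction m\ell) \ge \sum_q (m_q/m)\big(-\sum_w P(w,x_q)\log P(w,x_q)\big)$, which after multiplying by $m$ and rearranging yields $\log d_G^s(X\restriction m\ell) \le m s\ell - m\cdot\ell\log|\Sigma|\cdot H_\ell^{disjoint}(X\restriction m\ell) + O(1)$, where the $O(1)$ absorbs $\log c_0$ and the at-most-$\ell$ leftover symbols (whose betting contribution is bounded in terms of $|Q|$, $\ell$, and $\min\beta$). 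Then, since $d_G^s(X\restriction m\ell)$ is unbounded above along a subsequence (here I would note that $\limsup$ over all $n$ gives $\limsup$ over multiples of $\ell$, since $d_G^s$ changes by at most a bounded multiplicative factor over any $\ell$ consecutive symbols), we have $\limsup_m \log d_G^s(X\restriction m\ell) = \infty$, forcing $\liminf_m H_\ell^{disjoint}(X\restriction m\ell) \le s$, i.e. $H_\ell^{disjoint}(X) \le s$.

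Finally, taking the infimum over $\ell$ gives $H^{disjoint}(X) = \inf_\ell H_\ell^{disjoint}(X) \le s$, and letting $s \downarrow \dimfs(X)$ yields $H^{disjoint}(X) \le \dimfs(X)$, as required. One technical point I would be careful about: the gambler $G$ may place bets of value $0$, which makes $\log(1/\mathbb{P}_q(w))$ infinite; this is handled by observing that if $G$ ever bets $0$ on a block that actually occurs infinitely often then $d_G^s$ eventually becomes $0$ and cannot succeed, so we may assume all relevant bets are bounded below by a positive constant depending only on $G$, keeping the $O(1)$ error terms genuinely bounded. Combining Lemmas \ref{lem3.3} and \ref{lem3.6} then gives the full equivalence $\dimfs(X) = H^{disjoint}(X)$.
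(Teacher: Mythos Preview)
Your argument has a genuine gap at the concavity step. You correctly derive, via Gibbs' inequality,
\[
\log d_G^s(X\restriction m\ell) \;\le\; m\,s\ell \;-\; m\sum_q \frac{m_q}{m}\Big(-\sum_w P(w,x_q)\log P(w,x_q)\Big),
\]
i.e.\ $\log d \le m\,s\ell - m\cdot(\text{average of the per-state entropies})$. You then invoke concavity of Shannon entropy, which indeed gives
\[
H(\text{global empirical}) \;\ge\; \sum_q \frac{m_q}{m}\,H(\text{per-state empirical}_q).
\]
But this inequality points the \emph{wrong way}: it says $-(\text{avg entropy}) \ge -(\text{global entropy})$, so from $\log d \le m\,s\ell - m\cdot(\text{avg entropy})$ you cannot conclude $\log d \le m\,s\ell - m\cdot(\text{global entropy})$. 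Your claimed bound $\log d_G^s(X\restriction m\ell) \le m\,s\ell - m\cdot\ell\log|\Sigma|\cdot H_\ell^{disjoint}(X\restriction m\ell) + O(1)$ therefore does not follow.

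In fact the intermediate conclusion you reach, $H_\ell^{disjoint}(X)\le s$ for \emph{every} $\ell$ and every $s>\dimfs(X)$, is false: for $X=0101\cdots$ one has $\dimfs(X)=0$ but $H_1^{disjoint}(X)=1$. The correct replacement for the concavity step is the conditioning bound $H(W)\le H(W\mid S)+H(S)$ (chain rule plus $H(S\mid W)\ge 0$), which in your notation reads
\[
-\sum_w P(w,X\restriction m\ell)\log P(w,X\restriction m\ell)\;\le\;\sum_q\frac{m_q}{m}\Big(-\sum_w P(w,x_q)\log P(w,x_q)\Big)\;+\;H(S),
\]
with $H(S)\le\log|Q|$. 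This is exactly what the paper does: one obtains $H_\ell^{disjoint}(X)\le s+\log|Q|/\ell$, and only after taking $\ell\to\infty$ does the state-entropy term vanish to give $H^{disjoint}(X)\le s$. The rest of your outline (grouping blocks by entering state, Gibbs' inequality, handling zero bets and the $O(1)$ boundary terms) is fine and matches the paper's approach.
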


\begin{proof}
	
	Let a finite state s-gale $d_{G}^{(s)}$ corresponding to a  gambler $G = (Q,\delta,\beta,q_0)$ succeed on a sequence $X \in \Sigma^\infty$. Given $L \in \N$, consider the extended $k$-bet finite state gambler,
	$G_L= (Q',\delta',\beta',q_0')$, whose components are
	
	\begin{itemize}
		\item $Q' = Q \times [L] $
		\item For all $(q,n) \in Q'$  and $b \in \Sigma$,
		
		\[
		\delta'((q,n),b) = 
		\begin{cases}
			(\delta(q,b),n+1)& \text {if } n < L-1\\
			(\delta(q,b),0) & \text {if } n = L-1\\
		\end{cases}
		\]
		
		\item For all $(q,n) \in Q'$,
		
		\[
		\beta((q,n)) = \beta(q).
		\]
		
		\item $q_0' = (q_0,0)$
		
	\end{itemize}
	
	It is easy to see that if the s-gale $d_{G}^{(s)}$ succeeds on $X$, then the $s$-gale $d_{G_L}^{(s)}$ succeeds on $X$ as the same bets are placed by $G$ and $G'$.
	
	Now for any $k \in \mathbb{N}$, 
	
	\begin{align*}
		d_{G_L}^{(s)}(X \restriction {kL}) =  \prod_{q \in Q} \prod_{x \in  \{0,1\}^L}  \{ {2^{s.L} \times \beta(q,x)} \} ^{N(q,x)}
	\end{align*}

	where $N(q,x)$ is the number of times $x$ has occurred as the input when $G_L$ was at state $(q,0)$. $\beta(q,x)$ is the cumulative bet placed by $G_L$ on $x$ in that scenario.
	
	Taking logarithm,
	
	\begin{align*}
		\log (d_{G_L}^{(s)}(X \restriction {mL})) &= s.k.L + \sum_{q\in Q}  \sum_{x \in  \{0,1\}^L}  {N(q,x)} \; \log( \beta(q,x)) \\
		&= s.k.L + k \sum_{q \in Q} \frac{N(q)}{k}  \sum_{x \in  \{0,1\}^L}   \frac{N(q,x)} {N(q)} \; \log( \beta(q,x)) \\
		&= k \left( s.L +  \sum_{q \in Q} P(q) \sum_{x \in  \{0,1\}^L}   P(x|q) \; \log( \beta(q,x))\right).
	\end{align*}
	
	$d_{G_L}^{(s)}$ succeeds on $X$ only if for infinitely many $k \in \mathbb{N}$,
	
	\begin{align*}
		-  \sum_{q \in Q} P(q)  \sum_{x \in  \{0,1\}^L} P(x|q) \; \log( \beta(q,x)) < s.L.
	\end{align*}
	
	Using the properties of KL divergence \cite{CoverThomas1991}, we can argue that
	
	\begin{align*}
		- \sum_{q \in Q} P(q) \sum_{x \in  \{0,1\}^L} P(x|q) \; \log( P(x|q)) < sL.
	\end{align*}

	Using the following result from information theory  \cite{CoverThomas1991}, $H(X) \leq H(X|S) + H(S)$
	\begin{align*}
		\sum_{x \in  \{0,1\}^L} P(x) \; \log( P(x)) &< sL + H(Q). 
	\end{align*}
	where $H(Q)$ is the entropy of occurrence of states which is atmost $\log(n)$, where $n = |Q|$, the number of states in the original gambler $G$.

	Dividing by L, we have that for infinitely many $k \in \mathbb{N}$, 	
	\begin{align*}
		H_L(X \restriction {kL})\ < \;   s + \frac{\log(n)}{L}.
	\end{align*}
	
	Taking $\liminf$ over k,
	\begin{equation}\label{eq:HLandL}
		\liminf_{k \rightarrow \infty} H_L(X \restriction {kL}) \leq \;   s + \frac{\log(n)}{L}. 
	\end{equation}
	
	Taking infimum over all $L$,
	\begin{align*}
		H(X) = \inf_L \liminf_{k \rightarrow \infty} H_L(X \restriction {kL}) \leq s.
	\end{align*}
	
\end{proof}

We have the theorem by Bourke, Hitchcock and Vinodchandran \cite{bourke2005entropy} showing equivalence between finite state dimension and disjoint block entropy rates. It follows from Lemma \ref{lem3.3} and \ref{lem3.6}.

\begin{theorem} [Bourke, Hitchcock and Vinodchandran \cite{bourke2005entropy}] \label{thm:DisjEquiv}
	For any sequence $X \in \Sigma^\infty$
	\[\dimfs(X) = H^{disjoint}(X).\]
\end{theorem}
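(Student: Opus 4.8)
The plan is to derive Theorem~\ref{thm:DisjEquiv} directly from the two preceding lemmas, since together they pin down $\dimfs(X)$ and $H^{disjoint}(X)$ from both sides. Concretely, Lemma~\ref{lem3.3} gives $\dimfs(X) \le H^{disjoint}(X)$, and Lemma~\ref{lem3.6} gives $H^{disjoint}(X) \le \dimfs(X)$. Chaining these two inequalities yields $\dimfs(X) = H^{disjoint}(X)$, which is exactly the claim. So the proof is a one-line combination once the lemmas are in hand; there is no residual obstacle at the level of the theorem itself.

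If instead I were proving the two lemmas from scratch (the real content), the approach would be as follows. For the upper bound $\dimfs(X) \le H^{disjoint}(X)$, I would fix $s > H^{disjoint}(X)$, invoke Lemma~\ref{lem3.1} to extract a block length $\ell$ and a rational probability distribution $\mathbb{P}$ on $\Sigma^\ell$ whose cross-entropy against the empirical block frequencies of $X$ stays strictly below $\ell(s-d)$ for infinitely many prefixes. Then I would build the finite-state gambler $G_{\mathbb{P}}$ that cycles through a tree of states $\Sigma^{<\ell}$, placing at each step the conditional bet $\mathbb{P}(wb)/\mathbb{P}(w)$ so that after each full block of length $\ell$ the accumulated multiplicative factor on block $w$ is exactly $2^{s\ell}\mathbb{P}(w)$. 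Computing $\log d^s_{G_{\mathbb{P}}}(X \restriction k\ell)$ gives $k(s\ell - \sum_w P(w, X\restriction k\ell)\log(1/\mathbb{P}(w)))$, and the cross-entropy bound forces this to grow at least like $kd\ell \to \infty$ along infinitely many $k$, so the $s$-gale succeeds and hence $\dimfs(X) \le s$; letting $s \downarrow H^{disjoint}(X)$ finishes it.

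For the lower bound $H^{disjoint}(X) \le \dimfs(X)$, I would start from any finite-state $s$-gale succeeding on $X$ and, given a target block length $L$, pad the state set by a mod-$L$ counter to get $G_L$ with state set $Q \times [L]$; this does not change the bets, so $G_L$ still succeeds. Writing its capital after $kL$ symbols as a product over (state $q$ at a block boundary, length-$L$ input $x$) of factors $(2^{sL}\beta(q,x))^{N(q,x)}$ and taking logs, success forces $-\sum_q P(q)\sum_x P(x\mid q)\log\beta(q,x) < sL$ infinitely often; a KL-divergence (Gibbs) inequality then replaces $\beta(q,x)$ by $P(x\mid q)$, i.e.\ $H(X\mid Q) < sL$; and the information-theoretic bound $H(X) \le H(X\mid Q) + H(Q)$ with $H(Q) \le \log|Q|$ gives $H_L(X\restriction kL) < s + \log|Q|/L$. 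Taking $\liminf$ over $k$ and then $\inf$ over $L$ (the $\log|Q|/L$ term vanishes) yields $H^{disjoint}(X) \le s$, hence $\le \dimfs(X)$.

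The main subtlety — and the place I would be most careful — is in the lower-bound argument: one must make sure the mod-$L$ counter genuinely aligns the empirical conditional distributions $P(x\mid q)$ with block-entropy quantities, and that the Gibbs inequality is applied to the right marginal/conditional decomposition so that the state-entropy error term is a single $\log|Q|$ independent of $L$ rather than something that scales with $L$. In the upper-bound direction the analogous care point is the rationality requirement on the betting function: Lemma~\ref{lem3.1} already produces a rational $\mathbb{P}'$ close to the limiting (possibly irrational) distribution, and one has to check that the loss incurred by this rounding is absorbed into the slack $d$. Both of these are handled in Lemmas~\ref{lem3.1}, \ref{lem3.3}, and \ref{lem3.6}, so at the level of Theorem~\ref{thm:DisjEquiv} nothing further is required beyond citing them.
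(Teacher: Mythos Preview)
Your proposal is correct and matches the paper's approach exactly: the theorem is stated as an immediate consequence of Lemma~\ref{lem3.3} and Lemma~\ref{lem3.6}, and your sketches of those two lemmas (the block-based gambler $G_{\mathbb{P}}$ built from the rational distribution of Lemma~\ref{lem3.1}, and the mod-$L$ counter extension $G_L$ analyzed via KL divergence and the conditioning inequality $H(X)\le H(X\mid Q)+H(Q)$) coincide with the paper's own arguments. There is nothing to add at the level of the theorem itself.
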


\section{Product Gales}

We define the notion of product-gales which is the product of a finite number of $s$-gales.

\begin{definition}[Product-gale]
	Given $s \in [0, \infty]$ and $k \in \N$, the function $d : \Sigma^* \to [0, \infty)$ is called a $k$-product $s$-gale if
	for $0 < i \leq k$, there are $s$-gales $d_i$ such that
	\[d(w) = d_1(w) . d_2(w) ... d_k(w).\]
\end{definition}

\begin{definition} [Success set of product-gale]
	Let $d$ be a $k$-product $s$-gale.
	\begin{enumerate}
		\item $d$ is said to succeed on a sequence $X \in \Sigma^\infty$ if:
		
		\[\limsup_{n \rightarrow \infty} d(X \restriction n) = \infty.\]
		\item The success set of $d$ is
		
		\[S^\infty[d] = \{X \in \Sigma^\infty : d \text{ succeeds on } X\}.\]
	\end{enumerate}
\end{definition}

\subsection{Product gale characterisation of Hausdorff dimension}

Lutz \cite{Lutz03} showed an $s$-gale characterisation of Hausdorff dimension over the Cantor space $\Sigma^\infty$. We show a similar product $s$-gale characterisation.

Lemma \ref{lem:prod-galecondition} shows that the $k^{th}$ square root function of a $k$-product $s$-gale is an $s$- supergale. We use this to show a generalisation of the Kolmogorov inequality in Lemma \ref{lem:KraftIneqProdGale}.
 
\begin{lemma}
	\label{lem:prod-galecondition} 
	Let $d$ be a $k$-product $s$-gale, then for any $w  \in \{0,1\}^*$,
	\[ \sqrt[k]{d(w0)} +  \sqrt[k]{d(w1)} \leq 2^s  \sqrt[k]{d(w)}.\]
\end{lemma}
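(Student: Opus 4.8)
The plan is to reduce the inequality to a concavity argument. Write each factor $s$-gale as $d_i(w0) = 2^s d_i(w) \alpha_i$ and $d_i(w1) = 2^s d_i(w)(1-\alpha_i)$ for some $\alpha_i \in [0,1]$ (valid whenever $d_i(w) > 0$; the degenerate case $d_i(w) = 0$ forces $d_i(w0) = d_i(w1) = 0$ and is handled separately). Then $d(w0) = 2^{sk} \bigl(\prod_i d_i(w)\bigr)\bigl(\prod_i \alpha_i\bigr) = 2^{sk} d(w) \prod_i \alpha_i$, and likewise $d(w1) = 2^{sk} d(w) \prod_i (1-\alpha_i)$. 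Taking $k$-th roots, the claim becomes
\[
\Bigl(\textstyle\prod_{i=1}^k \alpha_i\Bigr)^{1/k} + \Bigl(\textstyle\prod_{i=1}^k (1-\alpha_i)\Bigr)^{1/k} \leq 1,
\]
after cancelling the common factor $2^s \sqrt[k]{d(w)}$.

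The key step is to prove this last inequality, which is exactly the superadditivity of the geometric mean (a form of the Mahler/Minkowski inequality, or equivalently a consequence of the concavity of $\log$ via the weighted AM–GM inequality). The cleanest route: apply AM–GM to each term, $\bigl(\prod_i \alpha_i\bigr)^{1/k} \leq \frac{1}{k}\sum_i \frac{\alpha_i}{t_i}\cdot\bigl(\prod_j t_j\bigr)^{1/k}$ — actually more simply, use that for nonnegative reals the function $(x_1,\dots,x_k)\mapsto (\prod x_i)^{1/k}$ is concave on $[0,\infty)^k$, so
\[
\Bigl(\textstyle\prod_i \alpha_i\Bigr)^{1/k} + \Bigl(\textstyle\prod_i (1-\alpha_i)\Bigr)^{1/k} \leq \Bigl(\textstyle\prod_i \bigl(\alpha_i + (1-\alpha_i)\bigr)\Bigr)^{1/k} = 1.
\]
Concavity of the geometric mean follows from the concavity of $\log$ and monotonicity of $\exp$, or can be cited directly.

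I expect the main (minor) obstacle to be bookkeeping around the boundary case where some $d_i(w) = 0$: then $d(w) = 0$ as well, $\sqrt[k]{d(w)} = 0$, and one must check that $d(w0)$ and $d(w1)$ are also $0$ — which holds because $d_i(w0), d_i(w1) \le 2^s d_i(w) = 0$ makes the corresponding $i$-th factor vanish in both products. Everything else is the one-line concavity estimate above, so the proof should be short.
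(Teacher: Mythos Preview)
Your proposal is correct and follows essentially the same route as the paper: write $d_i(wb)=2^s d_i(w)\gamma_i^{(b)}$ with $\gamma_i^{(0)}+\gamma_i^{(1)}=1$, factor out $2^s\sqrt[k]{d(w)}$, and reduce to the inequality $(\prod_i \alpha_i)^{1/k}+(\prod_i(1-\alpha_i))^{1/k}\le 1$. The paper dispatches this last step by applying AM--GM to each summand separately, $(\prod_i \alpha_i)^{1/k}\le \tfrac{1}{k}\sum_i\alpha_i$ and likewise for $1-\alpha_i$, then adding; your phrasing via superadditivity of the geometric mean (concavity plus positive homogeneity) is an equivalent packaging of the same estimate. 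You are in fact slightly more careful than the paper in treating the degenerate case $d_i(w)=0$, which the paper's division step silently assumes away.
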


\begin{proof}
	Let $d = d_1. d_2\cdots d_k$, where each $d_i$ is a constituent $s$-gale of the product gale $d$. 
	
	Take $\beta_i(w) = 2^{-s} \times \frac{d_i(w1)}{d_i(w)}$, the gale condition ensures that $1 - \beta_i(w) = 2^{-s} \times \frac{d_i(w0)}{d_i(w)}$.
	
	\begin{align*}
		\sqrt[k]{d(w0)} +  \sqrt[k]{d(w1)} &= \sqrt[k]{\prod_{i=1}^k d_i(w0)} + \sqrt[k]{\prod_{i=1}^k d_i(w1)} \\
		&= \sqrt[k]{\prod_{i=1}^k 2^s \times (1-\beta_i(w))  \times d_i(w)} + \sqrt[k]{\prod_{i=1}^k 2^s \times \beta_i(w) \times d_i(w)}\\
		&= 2^s \times \sqrt[k]{\prod_{i=1}^k {d_i(w)}} \times \Bigg\{ \sqrt[k]{\prod_{i=1}^k  \beta_i(w)} + \sqrt[k]{\prod_{i=1}^k 1 -  \beta_i(w)} \Bigg\}\\
		&= 2^s \times \sqrt[k]{d(w)} \times \Bigg\{ \sqrt[k]{\prod_{i=1}^k  \beta_i(w)} + \sqrt[k]{\prod_{i=1}^k 1 -  \beta_i(w)} \Bigg\} \\
		&\leq 2^s \times \sqrt[k]{d(w)} \times \frac{1}{k}\Bigg\{ \sum_{i = 1}^{k} \beta_{i}(w) + \sum_{i = 1}^{k} (1 - \beta_{i} (w))\Bigg\} \\
		&= 2^s \times \sqrt[k]{d(w)}.
	\end{align*}
	
	The second equality follows from the $s$-gale condition (Definition \ref{def:s-gale}).  The fourth follows from the fact that $d(w) = \prod_{i=1}^{k} d_i(w)$. The fifth inequality follows using the Arithmetic mean $\geq$ Geometric mean inequality.
\end{proof}

\begin{lemma}
	\label{lem:KraftIneqProdGale}
	
	Let $d$ be a $k$-product $s$-gale and $B \subseteq$ $\{0,1\}^*$ be a prefix set, then for all $w \in \{0,1\}^*$,
	
	\[\sum_{u \in B} 2^{-s|u|} \sqrt[k]{d(wu)} \leq \sqrt[k]{d(w)}.\]
\end{lemma}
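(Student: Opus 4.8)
The plan is to prove this by induction on the structure of the prefix set $B$, reducing the general statement to the single-step inequality already established in Lemma~\ref{lem:prod-galecondition}. Since $B$ is a prefix set (an antichain under $\prefix$), one natural route is to proceed by induction on the maximal length $m = \max_{u \in B} |u|$, treating the empty-string case and the length-zero case as the base, and peeling off the last symbol of the longest strings in the inductive step.

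\medskip

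Here is the approach in more detail. For the base case, if $B = \{\lambda\}$ the claim is the trivial equality $\sqrt[k]{d(w)} \le \sqrt[k]{d(w)}$, and if $B = \emptyset$ the left-hand side is $0$. For the inductive step, suppose the statement holds for all prefix sets of maximal length less than $m$, and let $B$ have maximal length $m \ge 1$. Partition $B$ into the strings of length exactly $m$ and the rest. Group the length-$m$ strings by their length-$(m-1)$ prefix: for each such prefix $v$, the strings of $B$ extending $v$ are among $\{v0, v1\}$. Replace each maximal pair (or singleton) $\{v0, v1\} \subseteq B$ by the single string $v$, forming a new set $B'$. One checks $B'$ is still a prefix set and has maximal length at most $m-1$ (here one uses that $B$ was an antichain, so $v$ itself was not already forced to appear and no conflict is created — if only $v0 \in B$ but not $v1$, we can still bound $2^{-s|v0|}\sqrt[k]{d(wv0)} \le 2^{-s|v0|}(\sqrt[k]{d(wv0)} + \sqrt[k]{d(wv1)}) \le 2^{-s|v|}\sqrt[k]{d(wv)}$). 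Applying Lemma~\ref{lem:prod-galecondition} at the node $wv$ gives, for each such $v$,
\[
2^{-s|v0|}\sqrt[k]{d(wv0)} + 2^{-s|v1|}\sqrt[k]{d(wv1)} \le 2^{-s(|v|+1)} \cdot 2^s \sqrt[k]{d(wv)} = 2^{-s|v|}\sqrt[k]{d(wv)},
\]
so the $B$-sum is bounded by the $B'$-sum, and the inductive hypothesis applied to $B'$ finishes the argument.

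\medskip

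The main obstacle I anticipate is the bookkeeping around prefix sets whose length-$m$ strings do not come in complete sibling pairs, and ensuring the constructed set $B'$ genuinely remains a prefix set at each stage (in particular that collapsing $\{v0,v1\}$ to $v$ does not create a string comparable to some other element of $B$ — this is exactly where the antichain hypothesis is used). An alternative that sidesteps some of this is to observe that, by Lemma~\ref{lem:prod-galecondition}, the function $e(w) := \sqrt[k]{d(w)}$ satisfies $e(w0) + e(w1) \le 2^s e(w)$, i.e.\ $e$ is an $s$-supergale; then the desired inequality $\sum_{u \in B} 2^{-s|u|} e(wu) \le e(w)$ is precisely the standard Kolmogorov/Kraft-type inequality for $s$-supergales, whose proof is the same induction but can be cited or stated once in that generality. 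I would likely present it via this supergale reformulation, since it isolates the one nontrivial input (Lemma~\ref{lem:prod-galecondition}) cleanly and makes the remaining induction routine.
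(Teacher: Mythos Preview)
Your proposal is correct and follows essentially the same induction on maximal string length as the paper, reducing to Lemma~\ref{lem:prod-galecondition} at each step; the only addition in the paper is an explicit final limit over $B \cap \{0,1\}^{\leq n}$ to cover infinite prefix sets, which your induction on $m = \max_{u\in B}|u|$ tacitly assumes is finite. Your supergale reformulation via $e(w) = \sqrt[k]{d(w)}$ is exactly how the paper is implicitly using Lemma~\ref{lem:prod-galecondition}, so the two arguments coincide.
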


\begin{proof}
	
	Using induction, first we show that for all $n \in \mathbb{N}$, the lemma holds for all prefix sets $B \subseteq \{0,1\}^{\leq n}$. That is for all $n \in \N$, $B \subseteq \{0,1\}^{\leq n}$, and $w \in \Sigma^*$,
	
	\[\sum_{u \in B} 2^{-s|u|} \sqrt[k]{d(wu)} \leq \sqrt[k]{d(w)}.\]
	
	This is trivial for $n = 0$. Assume that the lemma holds for $n$, and let $A \subseteq \{0,1\}^{\leq n+1}$ be a prefix set. Let $A_{\leq n} =  \{0,1\}^{\leq n} \cap A$ and $A_{= n} =  \{0,1\}^{= n} \cap A$. 
	Let $A' = \{u \in \{0,1\}^{n} \mid u0 \in A \text{ or } u1 \in A \}$.
	
	\medskip
	
	For all $w \in \{0,1\}^*$,
	
	\begin{align*}
		\sum_{u \in A_{=n+1}} 2^{-s|u|}  \sqrt[k]{d(wu)} &= 2^{-s(n+1)} \sum_{u \in A_{=n+1}} \sqrt[k]{d(wu)} \\
		& \leq 2^{-s(n+1)} \sum_{v \in A'} \sqrt[k]{d(wv0)} + \sqrt[k]{d(wv1)}.
	\end{align*}
	
	Applying  Lemma \ref{lem:prod-galecondition},
	
	\begin{align*}
		\sum_{u \in A_{=n+1}} 2^{-s|u|}  \sqrt[k]{d(wu)}	& \leq 2^{-s(n+1)} \sum_{v \in A'} 2^s \sqrt[k]{d(wv)} \\
		& = \sum_{v \in A'} 2^{-s|u|} \sqrt[k]{d(wv)}.
	\end{align*}

	We have that,
	\begin{align*}
		\sum_{u \in A} 2^{-s|u|}  \sqrt[k]{d(wu)} & = \sum_{u \in A_{\leq n}} 2^{-s|u|}  \sqrt[k]{d(wu)} + \sum_{u \in A_{=n+1}} 2^{-s|u|}  \sqrt[k]{d(wu)}\\
		&\leq  \sum_{u \in A_{\leq n}} 2^{-s|u|}  \sqrt[k]{d(wu)} +  \sum_{u \in A'} 2^{-s|u|}  \sqrt[k]{d(wu)}.
	\end{align*}
	
	Let $B = A_{\leq n} \cup A'$.	
	Note that $B$ is a prefix set and $A_{\leq{n}} \cap A' = \emptyset$. Then,
	
	\begin{align*}
		\sum_{u \in A} 2^{-s|u|}  \sqrt[k]{d(wu)} &\leq \sum_{u \in B} 2^{-s|u|} \sqrt[k]{d(wu)}.
	\end{align*}
	
	Since $B \subseteq \{0,1\}^{\leq n}$, it follows by induction hypothesis that for all $w \in \{0,1\}^*$,
	
	\begin{align*}
		\sum_{u \in A} 2^{-s|u|}  \sqrt[k]{d(wu)} 
		&\leq \sqrt[k]{d(w)}.
	\end{align*}
	
	Thus for all $n \in \mathbb{N}$, the lemma holds for prefix sets $B \subseteq \{0,1\}^{\leq n}$.
	
	Let $B$ be an arbitrary prefix set, then for all $w \in \{0,1\}^*$,
	
	\begin{align*}
		\sum_{u \in B} 2^{-s|u|}  \sqrt[k]{d(wu)} &= \lim_{n \to \infty} \sum_{u \in B \cap \{0,1\}^n} 2^{-s|u|}  \sqrt[k]{d(wu)} \\
		& \leq \sqrt[k]{d(w)}.
	\end{align*}
	
\end{proof}


\begin{definition} For $\F \subseteq \Sigma^\infty$,
	Let $\mathcal{G}_{prod}(\F)$ denote the set of all $s \in[0, \infty)$ such that there is a $k$ - product $s$-gale $d$, for some $k \in \mathbb{N}$, for which $\F \subseteq S^\infty[d]$.
\end{definition}

We show a product gale characterisation of Hausdorff dimension over $\Sigma^\infty$. The proof is a generalisation of the proof by Lutz \cite{Lutz2003}.

\begin{theorem} \label{thm:ProdGaleCharnDim}
	For all $\F \subseteq \Sigma^\infty$, \[\dim(\F) = \inf \mathcal{G}_{prod}(\F).\]
\end{theorem}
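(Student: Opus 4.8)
The plan is to prove the two inequalities $\dim(\F) \leq \inf \mathcal{G}_{prod}(\F)$ and $\dim(\F) \geq \inf \mathcal{G}_{prod}(\F)$ separately, mimicking Lutz's classical $s$-gale argument but routing all estimates through the $k$-th root.

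\medskip

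\noindent\textbf{Direction $\dim(\F) \leq \inf \mathcal{G}_{prod}(\F)$.} Suppose $s \in \mathcal{G}_{prod}(\F)$, witnessed by a $k$-product $s$-gale $d$ with $\F \subseteq S^\infty[d]$. I would show $\mathcal{H}^t(\F) = 0$ for every $t > s$, which gives $\dim(\F) \leq t$ and hence $\dim(\F) \leq s$. Put $\tilde d = \sqrt[k]{d}$; by Lemma \ref{lem:prod-galecondition}, $\tilde d$ is an $s$-supergale, and $d$ succeeds on $X$ iff $\tilde d$ does (the $k$-th root preserves $\limsup = \infty$). Fix a threshold and for each $X \in \F$ pick a prefix $w$ with $\tilde d(w)$ large; by König's lemma / a standard pruning argument extract a prefix set $B$ covering $\F$ on which $\tilde d(w) \geq M$ for a chosen large $M$. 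Lemma \ref{lem:KraftIneqProdGale} with $w = \lambda$ gives $\sum_{u \in B} 2^{-s|u|} \tilde d(u) \leq \tilde d(\lambda)$, so $\sum_{u \in B} 2^{-s|u|} \leq \tilde d(\lambda)/M$. Then, for $t > s$, truncating $B$ to long enough strings to make it a $\delta$-cover, $\sum_{u \in B} 2^{-t|u|} \leq 2^{-(t-s)\cdot(\text{min length})} \sum_{u \in B} 2^{-s|u|}$, which can be driven to $0$; hence $\mathcal{H}^t(\F) = 0$.

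\medskip

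\noindent\textbf{Direction $\dim(\F) \geq \inf \mathcal{G}_{prod}(\F)$.} Here it suffices to take $k = 1$, since an ordinary $s$-gale is a $1$-product $s$-gale, so $\mathcal{G}_{FS}$-style reasoning reduces this to Lutz's original construction: for any $s > \dim(\F)$, build a single $s$-gale succeeding on all of $\F$. I would recall that $\mathcal{H}^s(\F) = 0$ lets us pick, for each $m$, a $2^{-m}$-cover $\{w^{(m)}_i\}_i$ with $\sum_i 2^{-s|w^{(m)}_i|} \leq 2^{-m}$, and define $d(w) = c_0 \sum_m \sum_i 2^{s(|w| - |w^{(m)}_i|)} [\![\, w \prefix w^{(m)}_i \text{ or } w^{(m)}_i \prefix w \,]\!]$ (the usual summable-cover gale), verifying the exact $s$-gale equality and that $d(X \restriction n) \to \infty$ for $X \in \F$ because $X$ lies in infinitely many cylinders $C_{w^{(m)}_i}$. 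This shows $s \in \mathcal{G}_{prod}(\F)$, so $\inf \mathcal{G}_{prod}(\F) \leq s$ for all $s > \dim(\F)$.

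\medskip

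\noindent The main obstacle is the first direction: making the pruning-to-a-prefix-set step rigorous and checking that Lemma \ref{lem:KraftIneqProdGale} applies to the (possibly infinite) prefix set $B$ obtained — the lemma is stated for arbitrary prefix sets, so this is fine, but one must be careful that every $X \in \F$ actually has a prefix in $B$ (using $\limsup \tilde d = \infty$ to get a prefix with $\tilde d \geq M$, then taking the shortest such prefix to ensure the prefix-freeness and covering properties simultaneously). The second direction is essentially a verbatim citation of Lutz's construction and should be routine.
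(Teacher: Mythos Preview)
Your proposal is correct and follows essentially the same route as the paper: both directions hinge on Lemma~\ref{lem:KraftIneqProdGale} applied to the prefix set of first-crossing times of a large threshold, and the easy direction is handled by citing Lutz's $s$-gale construction (a $1$-product gale). Two minor points of divergence: the paper proves the sharper statement $\mathcal{H}^s(\F)=0$ directly (not just $\mathcal{H}^t(\F)=0$ for $t>s$) by letting the threshold be $2^{nk}\cdot a_\ell$ with $a_\ell = 1+\max\{d(w):|w|\le \ell\}$, which simultaneously forces $A_\ell\subseteq\Sigma^{\ge \ell}$ and gives $\sum_{w\in A_\ell}2^{-s|w|}\le 2^{-n}$; and your phrase ``truncating $B$ to long enough strings'' should really read ``choosing $M$ large enough that every $w\in B$ is long'' --- literal truncation would destroy the covering property, though your final paragraph shows you have the right picture.
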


\begin{proof}
	
	It suffices to show that for all $s \in [0, \infty)$, 
	
	\[H^s(\F) = 0 \iff s \in \mathcal{G}_{prod}(\F).\]
	
	Lutz (Theorem 3.10 in  \cite{Lutz03}) showed that if $H^s(\F) = 0$, then there exists an $s$-gale $d$ that succeeds on $\F$. Since an $s$-gale is also a product $s$-gale, it follows that $H^s(\F) = 0 \implies s \in \mathcal{G}_{prod}(\F)$.
	
	Conversely, assume that a $k$-product $s$-gale $d$ succeeds on $\F$. 
	It suffices to show that $H^s(X) \leq 2^{-n}$, for any $n \in \mathbb{N}$.
	
	Let \[a_\ell = 1 + max\{d(w) \mid w \in\{0,1\}^{\leq \ell}\}.\]
	
	And let \[A_\ell = \{w\in \{0,1\}^* \mid d(w) \geq 2^{nk} \cdot a_\ell \text{ and } (\forall v) \; [v \sqsubset w \implies d(v) < 2^{nr} \cdot a_\ell ]\}.\]
	
	By definition, $A_\ell$ is a prefix set. Also, we have $A_\ell \subseteq \Sigma^{\geq \ell}$. 
	It is also clear that $\F \subseteq S^\infty[d] \subseteq \bigcup_{w \in A_k} C_w.$
	
	By Lemma \ref{lem:KraftIneqProdGale}, we have
	
	\[\sqrt[k]{d(\lambda)} \geq \sum_{w\in A_\ell}2^{-s|w|} \sqrt[k]{d(w)}.\]
	
	Since $d(w) \geq 2^{nk} \cdot d(\lambda)$,
	
	\[\sqrt[k]{d(\lambda)} \geq  2^n . \sqrt[k]{d(\lambda)} \sum_{w\in A_\ell}2^{-s|w|}.\]
	
	Taking $d(\lambda) = 1$, it follows that
	
	\[\sum_{w\in A_\ell}2^{-s|w|} \leq 2^{-n}.\]
	
	Therefore, for all $\delta >0$, taking $\ell$ such that $2^{-\ell} < \delta$, we have \[\mathcal{H}^s_\delta(\F) \leq 2^{-n}\].
	
	From this it follows that for all $n \in \N$,
	
	\[\mathcal{H}^s(\F) = \lim_{\delta \to 0} \mathcal{H}^s_\delta(\F)  \leq 2^{-n}.\]
	
\end{proof}

\section{$k$-bet finite state gamblers}

We introduce the notion of a $k$-bet finite state gambler. In this model, at each state, the gambler can place $k$ separate bets on the next bit. 

\begin{definition}[$k$-bet finite-state gambler]
	A $k$-bet finite-state gambler ($k$-bet FSG) is a 5-tuple
	\[
	G = (Q, \delta, \vec{\beta}, q_0, c_0),
	\]
	where
	\begin{itemize}
		\item \( Q \) is a nonempty, finite set of states,
		\item \( \delta : Q \times \{0, 1\} \to Q \) is the transition function,
		\item \( \vec{\beta} : Q \to (\Q \cap [0,1])^k \) is the betting function,
		\item \( q_0 \in Q \) is the initial state, and
		\item \(c_0\), is the initial capital, a nonnegative rational number.
	\end{itemize}
\end{definition}

The normal finite-state gambler is a special case of the  $k$-bet s-gale when $k = 1$.

We define the notion of an $s$-gale induced by a $k$-bet finite state gambler $G$.

\begin{definition} [$s$-product-gale induced by $k$-bet FSG]
	Given $s \in [0,\infty)$, the $s$-product-gale induced by the $k$-bet FSG \( G \) is the function 
	\[
	d_{G}^s : \{0, 1\}^* \to [0, \infty)
	\]
	defined by the recursion:
	\[
	d_{G}^s(\lambda) = c_0,
	\]
	and
	\[
	d_{G}^s(wb) =	d_{G}^s (w) \cdot  \prod_{i=1}^{k}   2^s \cdot  \left[(1 - b) \cdot (1 - \beta_i(w)) + b\cdot\beta_i(w)\right].
	\]
\end{definition}

\textbf{Note:} The notion of multi-bet finite-state $s$-gales is fundamentally different from the notion of multi-account finite-state $s$-gales in \cite{Dai2001}. In the latter notion, $k$- separate  bets are maintained by the automata. The key difference is that the $s$-gale in the multi-account case  is a sum of the $s$-gales $d_i^s$ induced by the $k$  accounts. Therefore, the summation function $d^s$ is still an $s$-gale, as gales are closed under finite addition. 

The key difference in the multi-account case is that the \emph{product} of the $s$-gales induced by the $k$ bets are taken as the overall $s$-product gale. This does not effectively become an $s$-gale as $s$-gales are not closed under finite multiplication. This however becomes a $k$-product $s$-gale.

\begin{definition}
	Let $\mathcal{G}_{FS}^{\mbet}(X)$ denote the set of all $s \in[0,1]$ such that there exists a $k \in \N$ and a $k$-bet finite state $s$-gale $d^s_G$ for which $X \in S^\infty[d_G^s]$.
\end{definition}

Finite state dimension is defined using single-bet finite state $s$-gales. We define an analogous notion for multi-bet finite state $s$-product-gales. We later show (Theorem \ref{thm:MbetFsdEqvl}) that both these notions are equivalent.
   
\begin{definition}
	The \emph{multi-bet finite state dimension} of a sequence $X \in \Sigma^\infty$ is
	\[\dimfs^{\mbet}(X) = \inf \mathcal{G}_{FS}^{\mbet}(X).\]
\end{definition}

\section{Finite state dimension and sliding block entropy rates}

In this section, we show the equivalence between multi-bet finite state dimension and sliding block entropy rates.

\subsection{Betting on sliding entropy rates}

For sliding block entropy rates, we have the following Lemma. The proof proceeds in the same lines as proof of Lemma \ref{lem3.1}.

\begin{lemma} 
	\label{lem3.2} 
	Let $X \in \Sigma^\infty$ with $H^{sliding}(X) = s$. 
	For any $ s' >s$, there exists a block length  $\ell \in \mathbb{N}$, and a probability distribution function $\mathbb{P} : \{0,1\}^\ell \rightarrow \Q \cap [0,1]$, such that for infinitely many $k \in \mathbb{N}$,
	
	
	\[   \sum_{w \in \{0,1\}^\ell}  P(w,X \restriction k) \; \log(1 / \mathbb{P}(w))) < \ell \; (s' - d)\]
	
	for some $d < s'$.
\end{lemma}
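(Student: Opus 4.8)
The plan is to mirror the proof of Lemma \ref{lem3.1} almost verbatim, replacing disjoint block frequencies with sliding block frequencies and keeping track of the fact that sliding frequencies of $\ell$-blocks in $X \restriction k$ are taken over $k - \ell + 1$ positions rather than $k$ aligned positions. Since $\inf_\ell H_\ell^{sliding}(X) = s$, for any $s' > s$ there is a block length $\ell$ with $H_\ell^{sliding}(X) < s'$; set $s' - H_\ell^{sliding}(X) = 3d$, so that $\liminf_{k\to\infty} H_\ell^{sliding}(X \restriction k) = s' - 3d$, and hence there are infinitely many $k$ with $H_\ell^{sliding}(X \restriction k) < s' - 2d$. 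For each such $k$ this unpacks (after multiplying through by $\ell \log|\Sigma|$, or specializing to $|\Sigma| = 2$ as in Lemma \ref{lem3.1}) to the inequality $-\frac{1}{\ell}\sum_{w} P(w, X\restriction k)\log P(w, X \restriction k) < s' - 2d$.

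Next I would extract a convergent subsequence of the frequency vectors: since each $P(w, X\restriction k) \in [0,1]$ and $\Sigma^\ell$ is finite, there is a subsequence $k'$ along which $P(\cdot, X \restriction k')$ converges to some probability vector $\{\mathbb{P}(w)\}_{w \in \Sigma^\ell}$. By continuity of the entropy function, this limit vector satisfies $-\frac{1}{\ell}\sum_w \mathbb{P}(w)\log \mathbb{P}(w) < s' - 2d$ as well. Then, exactly as in Lemma \ref{lem3.1}, for every $\epsilon > 0$ there is a threshold beyond which $P(w, X \restriction k') < \mathbb{P}(w) + \epsilon$ for all $w$, so
\[
\sum_{w \in \{0,1\}^\ell} P(w, X\restriction k')\,\log(1/\mathbb{P}(w)) \le \sum_{w}(\mathbb{P}(w)+\epsilon)\log(1/\mathbb{P}(w)) = -\sum_w \mathbb{P}(w)\log\mathbb{P}(w) + \epsilon\sum_w \log(1/\mathbb{P}(w)),
\]
and applying the entropy bound on $\mathbb{P}$ gives $\sum_w P(w,X\restriction k')\log(1/\mathbb{P}(w)) < \ell(s' - 2d) + \epsilon\sum_w\log(1/\mathbb{P}(w))$ for infinitely many $k'$. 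Finally I would discretize: replace $\mathbb{P}$ by a rational-valued probability distribution $\mathbb{P}'$ with $|\log \mathbb{P}'(w) - \log \mathbb{P}(w)| < \epsilon'$ for all $w$, and choose $\epsilon, \epsilon'$ small enough that the right-hand side drops below $\ell(s' - d)$, which yields $\sum_w P(w, X\restriction k')\log(1/\mathbb{P}'(w)) < \ell(s'-d)$ for infinitely many $k'$, as required.

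The one genuinely new point compared to Lemma \ref{lem3.1} is that here the argument of the entropy is $X \restriction k$ (an arbitrary-length prefix) rather than $X \restriction k\ell$ (a length that is a multiple of $\ell$), and the sliding frequency normalizes by $k - \ell + 1$; but since $\ell$ is fixed and $k \to \infty$, none of the limiting or continuity arguments are affected, and the only care needed is to make sure that "infinitely many $k$" is preserved when passing to the subsequence $k'$ and when intersecting with the condition $H_\ell^{sliding}(X \restriction k) < s' - 2d$. I expect the main (minor) obstacle to be purely bookkeeping: ensuring the chain of "infinitely many $k$" survives the three successive restrictions (the entropy bound, the convergent subsequence, and the $\epsilon$-threshold), and confirming that the statement as written (with $P(w, X \restriction k)$ meaning the sliding frequency) is consistent with the definitions in the Sliding block entropy rates subsection. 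Since this is all routine once the disjoint-case template is in hand, the proof can reasonably be written as "proceeds along the same lines as Lemma \ref{lem3.1}, replacing disjoint block frequencies by sliding block frequencies," with the subsequence-extraction and discretization steps spelled out as above.
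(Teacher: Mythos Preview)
Your proposal is correct and matches the paper's approach exactly: the paper simply states that ``the proof proceeds in the same lines as proof of Lemma~\ref{lem3.1}'' and gives no separate argument, so your step-by-step transcription of the Lemma~\ref{lem3.1} proof with sliding frequencies in place of disjoint ones is precisely what is intended. Your remarks about the normalization $k-\ell+1$ versus $k$ and the bookkeeping of ``infinitely many $k$'' through the subsequence and discretization steps are the only additional care needed, and they are handled correctly.
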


We will use $\mathbb{P}(w)$'s obtained from Lemma \ref{lem3.2} to construct an $\ell$-bet $s$-gale that bets on strings on length $\ell$ and succeeds on strings $X$ for which $H^{sliding}_\ell(X) < s$. 

The key idea is that $\ell$-bet gales can place bets on a $\ell$-length sliding block. The states keep track of the current $\ell$-length block.
 In the diagram below, we illustrate (just) the $\ell$-bets placed and the transitions when  $\ell = 3$ and the next input is $a_1.a_2.a_3$. The bets are placed according to the distribution $p$ on $\Sigma^\ell$.
 Here $p_1 = p(a_1) = \sum_{w \in \Sigma^2}p(a_1.w), p_2 = p(a_1.a_2|a_1) , p_3 = p(a_1a_2a_3 | a_1 a_2)$. Note that only the transitions on input $a_1.a_2.a_3$ is shown.

\begin{center}
\begin{tikzpicture} [shorten >=1pt, node distance=2.5cm, on grid, auto]
	
	\node[state] (q0) {$x.x.x$};
	\node[state, right of=q0] (q1) {$x.x.a_1$};
	\node[state, right of=q1] (q2) {$x.a_1.a_2$};
	\node[state, right of=q2] (q3) {$a_1.a_2.a_3$};
	
	\path[->] (q0) edge node {} (q1);
	\path[->] (q0) edge [bend left] node {$p_1$} (q1);
	\path[->] (q0) edge [bend right] node {} (q1);
	
	\path[->] (q1) edge node {$p_2$} (q2);
	\path[->] (q1) edge [bend left] node { } (q2);
	\path[->] (q1) edge [bend right] node { } (q2);
	
	\path[->] (q2) edge node {} (q3);
	\path[->] (q2) edge [bend left] node {} (q3);
	\path[->] (q2) edge [bend right] node {$p_3$} (q3);
	
\end{tikzpicture}
\end{center}

We formalise this idea in the following lemma.

\begin{lemma} For all $X \in \Sigma^\infty$,
	
	\[\dimfs^{\mbet}(X) \leq H^{sliding}(X) \]
\end{lemma}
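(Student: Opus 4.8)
The plan is to mirror the structure of Lemma \ref{lem3.3}, but replace the single cumulative bet on an $\ell$-length \emph{disjoint} block with $\ell$ simultaneous bets that together capture every $\ell$-length \emph{sliding} window. Fix $s > H^{sliding}(X)$ and obtain from Lemma \ref{lem3.2} a block length $\ell$ and a rational probability distribution $\mathbb{P}$ on $\{0,1\}^\ell$ such that $\sum_{w} P(w, X\restriction k)\log(1/\mathbb{P}(w)) < \ell(s-d)$ for some $d>0$ and infinitely many $k$. I would then build an $\ell$-bet finite-state gambler $G_{\mathbb{P}}$ whose state set is $\Sigma^{<\ell}$ (padded with a dummy symbol $x$ so that the first $\ell-1$ states record how much of the current window has been read), with transition $\delta(w,b)=$ the last $\min(|w|+1,\ell)$ symbols of $wb$. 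The crucial definition is the betting vector: at a state $w$ of length $\ell-1$, the $j$-th component $\beta_j(w)$ is the conditional probability under $\mathbb{P}$ of the next symbol being $1$ given that the $j$ most recent symbols read so far form the appropriate suffix — in other words, the $j$ "tracks" are phase-shifted copies of the same predictor $\mathbb{P}$, each betting on a differently aligned $\ell$-block, exactly as in the $\ell=3$ picture ($p_1,p_2,p_3$ on the three tracks).

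Next I would compute the capital. Over any window of $k$ input symbols, each of the $k$ (roughly) positions $i$ contributes, across the $\ell$ tracks, exactly one full product of conditional probabilities $\mathbb{P}(X[i:i+\ell])$ spread out over $\ell$ consecutive steps — because track $j$ completes its bet on the block starting at position $i$ precisely when the read head is $j$ symbols into that block. Collecting the contributions of all tracks and telescoping the conditional probabilities, the induced $s$-product-gale satisfies
\[
d^s_{G_{\mathbb{P}}}(X\restriction k) \;=\; \prod_{w\in\Sigma^\ell} \bigl(2^{s\ell}\,\mathbb{P}(w)\bigr)^{N(w,\,X\restriction k)} \cdot (\text{boundary factor}),
\]
where the boundary factor accounts for the $O(\ell)$ partially-completed windows at the two ends and is bounded below by a constant depending only on $\ell$ and $\mathbb{P}$ (hence irrelevant to success). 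Taking logarithms and dividing by $k$ gives, up to $O(\ell/k)$,
\[
\tfrac{1}{k}\log d^s_{G_{\mathbb{P}}}(X\restriction k) \;\geq\; s\ell - \sum_{w\in\Sigma^\ell} P(w, X\restriction k)\,\log(1/\mathbb{P}(w)) \;-\; o(1) \;>\; s\ell - \ell(s-d) \;=\; d\ell
\]
for infinitely many $k$ by Lemma \ref{lem3.2}. Since $d\ell>0$ is a fixed positive constant, $\log d^s_{G_{\mathbb{P}}}(X\restriction k)\to\infty$ along this subsequence, so $d^s_{G_{\mathbb{P}}}$ succeeds on $X$; therefore $s\in\mathcal{G}_{FS}^{\mbet}(X)$ and $\dimfs^{\mbet}(X)\le s$. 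Letting $s\downarrow H^{sliding}(X)$ finishes the proof.

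\textbf{Main obstacle.} The genuinely delicate step is the bookkeeping that turns the $\ell$ phase-shifted betting tracks into the clean product $\prod_w (2^{s\ell}\mathbb{P}(w))^{N(w,X\restriction k)}$: one must verify that each sliding window of length $\ell$ is "paid out" exactly once, over exactly $\ell$ steps, by exactly one track, and that the $2^s$ factors (there are $\ell$ of them per symbol, since it is an $\ell$-product $s$-gale) combine to give the $2^{s\ell}$ per block. I would handle this by a careful accounting argument: assign to window-position $i$ the contribution of track $\bigl((\ell - i) \bmod \ell\bigr)+1$ during steps $i, i+1, \dots, i+\ell-1$, check that the conditional-probability factors along those $\ell$ steps telescope to $\mathbb{P}(X[i:i+\ell])$, and confirm that every step/track pair is used exactly once except for the $O(\ell)$ pairs near the boundary, which get absorbed into the constant boundary factor. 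Everything after this identity is a routine repeat of the estimate in Lemma \ref{lem3.3}.
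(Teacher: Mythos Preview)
Your approach is the paper's: obtain $\ell$ and $\mathbb{P}$ from Lemma~\ref{lem3.2}, build an $\ell$-bet FSG whose state records the last $\ell-1$ symbols (the paper keeps $\ell$, but the oldest one is never used), let track $j$ bet the $\mathbb{P}$-conditional given a length-$j$ suffix of the state, show the induced product gale equals a bounded boundary constant times $\prod_{w}(2^{s\ell}\mathbb{P}(w))^{N(w,X\restriction n)}$, and then repeat the log-computation of Lemma~\ref{lem3.3}. The product identity and the absorption of the $O(\ell)$ boundary terms into a constant are exactly what the paper does.

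The one slip is in your ``Main obstacle'' accounting. With the betting rule you stated (track $j$ conditions on the $j$ most recent symbols), a fixed track does \emph{not} service a single window over $\ell$ consecutive steps. At step $m$ track $j$ contributes the factor $\mathbb{P}\bigl(X[m]\mid X[m{-}j{:}m]\bigr)$, which is the $(j{+}1)$-st conditional factor of the window that started at position $m-j$; hence each window $X[i{:}i{+}\ell]$ is assembled from \emph{all} $\ell$ tracks, track $j$ contributing at step $i+j$, and these telescope to $\mathbb{P}(X[i{:}i{+}\ell])$. Your assignment ``window $i$ to track $((\ell-i)\bmod\ell)+1$'' would require the state to also carry the position modulo $\ell$, which it does not. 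Once the bookkeeping is rewritten this way (each track handles one \emph{bit-position} across all windows, rather than one window across all bits), every (step, track) pair is used exactly once apart from the $O(\ell)$ boundary pairs, and your argument coincides with the paper's.
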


\begin{proof}
	For any $X \in \Sigma^\infty$, and $ s > H^{sliding}(X) $, let $\ell$ be the block length and $\mathbb{P}: \Sigma^\ell \to \Q \cap [0,1]$ be the probability distribution on $\Sigma^\ell$ obtained from Lemma \ref{lem3.2}. 	We design a  FSG $G_{\mathbb{P}}$, so that at each sliding block occurrence of $w \in \Sigma^\ell$, a cumulative bet of $\mathbb{P}(w)$ is placed. 
	
	Consider the following $\ell$-bet finite state gambler,
	$G_{\mathbb{P}} = (Q,\delta,\vec{\beta},q_0)$, whose components are :
	
	\begin{itemize}
		\item $Q = \{0,1\}^{< \ell} \cup \{0,1\}^\ell$.
		\item If $|w| < \ell$, 
		\[\delta(w, b) = wb.\]
		
		For all $w \in \Sigma^ {\ell - 1}$ and $b,b' \in \Sigma$,
		
		\[
		\delta(bw,b') = wb'.
		\]
		
		\item If $|w| < \ell$. Then, for $1 \leq i \leq \ell$ , $$\beta_i(w) = 1/2.$$
		
		For any $w \in \Sigma^\ell$, let $w = a_0 a_1 \dots a_{\ell-1}$,
		for $1 \leq i \leq \ell$, 
		
		\[\beta_{i} (w) = \mathbb{P}(a_i \dots a_{\ell -1}.1) / \mathbb{P}(a_i \dots a_{\ell -1}). \]
		
		where for $v \in \Sigma^{< \ell}$, $\mathbb{P}(v) = \sum\limits_{w \in \Sigma^\ell : v \prefix w} \mathbb{P}(w)$.

		\item $q_0 = \lambda$.
		
	\end{itemize}

	\medskip
	
	Consider the $s$-gale $d_{G_{\mathbb{P}}}^{s'}$ induced by $G_{\mathbb{P}}$. If the sliding block $w_i \in \Sigma^\ell$ has appeared $n_i$ times in $X \restriction n$,
	\[d_{G_{\mathbb{P}}}^{s}(X \restriction n) \geq 2^c. \prod_{w_i \in  \Sigma^\ell} (2^{s.\ell} \cdot \mathbb{P}(w_i)) ^{n_i}\]
	
	where $c$ is a constant that can encapsulate the losses by the tail bets and first $\ell$ bets placed by $G_\mathbb{P}$ on $X \restriction n$. Let $k = n - \ell + 1$.
	
	Proceeding with the same analysis as proof of Lemma \ref{lem3.3}, we get
	\begin{align*}
		\log(d_{G_{\mathbb{P}}}^{s}(X \restriction n))  &\geq c \sum_{w_i \in  \{0,1\}^l} {n_i} \; (s\cdot \ell  + \log (\mathbb{P}(w_i)) ) \\
		&= c \cdot k  \sum_{w_i \in  \{0,1\}^l} \frac{n_i}{k} \; (s\cdot \ell  + \log (\mathbb{P}(w_i)) )\\ 
		&= c\cdot k  \sum_{w_i \in  \{0,1\}^l} P(w_i,X \restriction k) \; (s\ell + \log(\mathbb{P}(w_i)))\\
		&=  c\cdot k  \left(s\ell - \sum_{w_i \in  \{0,1\}^l} P(w_i,X\restriction k) \; \log(1/{\mathbb{P}(w_i)})\right).
	\end{align*}
	
	From Lemma \ref{lem3.2}, we have for some $d < s$, for infinitely many $k \in \N$,
	\begin{align*}
		\log(d_{G_{\mathbb{P}}}^{s}(X \restriction k\ell)) \;\; 
		&\geq  k \cdot c \cdot(s\ell - \ell(s-d) )\\
		&\geq  k\cdot c\cdot d \cdot \ell.
	\end{align*}
	Since $k$ is unbounded, we see that the $\ell$-bet $s$-gale induced by $G_{\mathbb{P}}$ succeeds on $X$.
	Hence, for any $s >H^{sliding}(X)$, $\dimfs^{\mbet}(X) \leq s$. 
	
	From this, it follows that for all $X \in \Sigma^\infty$, $\dimfs^{\mbet}(X) \leq H^{sliding}(X)$.
\end{proof}

\subsection{Sliding entropy rates from gale win}

We now show that the sliding block entropy rates of a sequence $X$ is less than or equal to $s$ if a $\ell$-product $s$-gale corresponding to $\ell$-bet FSG $G$ suceeds on it. The idea is that given a multiple $L = \ell m$ of $\ell$, we construct an $L$-bet gambler $G'$ that places $m$ copies of the bets placed by $G$. We analyse $G'$ to show that $H^{sliding}_L(X) \leq s + c/L$ for a constant $c$.

\begin{lemma}
	\label{lem3.5}
	
	For every $X \in \Sigma^\infty$,
	
	\[ H^{sliding}(X)  \leq \dimfs^{\mbet}(X). \]
\end{lemma}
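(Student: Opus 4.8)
The plan is to mirror the structure of the proof of Lemma~\ref{lem3.6}, which established the analogous inequality $H^{disjoint}(X)\leq\dimfs(X)$, but now accounting for the fact that the winning gale is an $\ell$-product $s$-gale induced by an $\ell$-bet FSG, and that we want to bound the \emph{sliding} block entropy rate. Suppose $s>\dimfs^{\mbet}(X)$, so there is a $k$-bet finite state gambler $G=(Q,\delta,\vec\beta,q_0,c_0)$ whose induced $s$-product-gale $d_G^s$ succeeds on $X$. Fix a large multiple $L=\ell m$ of the relevant block length. First I would build an $L$-bet FSG $G'$ whose state space is $Q\times[L]$ (a counter component cycling mod $L$ on top of $Q$), and whose bet vector at $(q,n)$ is essentially $m$ concatenated copies of the bets $G$ would place over the next $\ell$ steps starting at $q$ — so that over each window of $L$ consecutive input symbols, $G'$ reproduces exactly the product of bets that $G$ places, just repeated $m$ times. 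The point of the counter and the repetition is that $d_{G'}^s$ still succeeds on $X$ (it places, up to a bounded multiplicative constant absorbing edge effects, the same bets as $d_G^s$), and that aggregating over all windows aligned to multiples of $L$ lets us read off an entropy estimate for length-$L$ blocks.

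Next I would write $d_{G'}^s(X\restriction cL)$ as a product over $q\in Q$ and $x\in\{0,1\}^L$ of $\big(2^{sL}\cdot\beta(q,x)\big)^{N(q,x)}$, where $N(q,x)$ counts the occurrences of input block $x$ when $G'$ is in a state of the form $(q,0)$, and $\beta(q,x)$ is the cumulative bet $G'$ places on $x$ from such a state. Taking logarithms and normalizing by $c$ (the number of length-$L$ windows), exactly as in Lemma~\ref{lem3.6}, success forces, for infinitely many $c$,
\[
-\sum_{q\in Q}P(q)\sum_{x\in\{0,1\}^L}P(x\mid q)\,\log\beta(q,x) < sL,
\]
and then Gibbs' inequality (nonnegativity of KL divergence) upgrades this to $-\sum_q P(q)\sum_x P(x\mid q)\log P(x\mid q)<sL$, and the bound $H(X)\le H(X\mid S)+H(S)$ with $H(S)\le\log|Q\times[L]|=\log|Q|+\log L$ gives $H_L(X\restriction cL)<s+(\log|Q|+\log L)/L$ for the \emph{disjoint} length-$L$ entropy.

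The genuinely new step — the main obstacle — is converting the disjoint block entropy estimate at length $L$ into a \emph{sliding} block entropy estimate. The clean way is to relate $H_L^{sliding}(X\restriction n)$ to $H_L^{disjoint}$ of the shifted prefixes: for each of the $L$ possible offsets $r\in\{0,\dots,L-1\}$, the string $X$ read from position $r$ in disjoint $L$-blocks has a disjoint block frequency distribution $P^{(r)}$, and the sliding frequency distribution over a long prefix is (up to $O(L/n)$ error) the average $\tfrac1L\sum_{r=0}^{L-1}P^{(r)}$. By concavity of Shannon entropy, $H_L^{sliding}(X\restriction n)\ge$ something — wait, concavity gives the wrong direction; instead I would use that each offset $r$ is handled by the same argument (run $G'$ but start the counter so that $(q,0)$ occurs at positions $\equiv r\pmod L$, or equivalently note the gale succeeds regardless of where we begin aligning), so that $\liminf_c H_L^{disjoint}(\text{offset-}r\text{ prefix})\le s+(\log|Q|+\log L)/L$ for \emph{every} $r$; then since the sliding $\ell$-frequency is the offset-average and entropy-rate is what we bound after dividing by $L$, taking $\liminf$ over $n$ and using subadditivity/averaging of the entropies across the $L$ offsets yields $H_L^{sliding}(X)\le s+(\log|Q|+\log L)/L$. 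Finally, letting $L=\ell m\to\infty$ the error term vanishes, so $\inf_L H_L^{sliding}(X)=H^{sliding}(X)\le s$; since $s>\dimfs^{\mbet}(X)$ was arbitrary, $H^{sliding}(X)\le\dimfs^{\mbet}(X)$. I expect the delicate bookkeeping to be (i) making precise that $G'$ placing $m$ repeated copies of $G$'s bets indeed multiplies the log-capital growth rate correctly per $L$-window with only a bounded additive constant from the first and last partial windows, and (ii) the offset-averaging argument showing sliding entropy is controlled by the disjoint entropies at all $L$ alignments — this is the step where one must be careful about which direction convexity/concavity of entropy points, and I would resolve it by bounding each alignment separately rather than by a single averaging inequality.
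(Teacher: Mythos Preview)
Your plan has a genuine gap at exactly the step you flag as ``the main obstacle.'' After adding the counter and obtaining, for each offset $r\in\{0,\dots,L-1\}$, the bound $\liminf H_L^{disjoint}(\text{offset-}r\text{ prefix})\le s+(\log|Q|+\log L)/L$, you still need to bound $H_L^{sliding}$. But the sliding frequency distribution is (up to $O(L/n)$) the \emph{average} $\tfrac1L\sum_r P^{(r)}$ of the offset-$r$ disjoint distributions, and by concavity of Shannon entropy the entropy of this average is \emph{at least} the average of the $H(P^{(r)})$'s --- so the inequality goes the wrong way, as you noticed. ``Bounding each alignment separately'' does not fix this: knowing $H(P^{(r)})\le s+\varepsilon$ for every $r$ individually does not imply $H\big(\tfrac1L\sum_r P^{(r)}\big)\le s+\varepsilon$. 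Moreover, the $\liminf$'s over $c$ for the different offsets need not be realized along a common subsequence, which blocks even an averaging-of-bounds argument. This conversion is essentially the content of the Kozachinskiy--Shen theorem the paper is working toward, so invoking it here would be circular.

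The paper sidesteps the issue by \emph{not} introducing a counter: it takes $Q'=Q$, the same transition function, and makes $G'$ an $L$-bet gambler whose $L$ bets are simply $m$ repeated copies of the original $\ell$ bets, so that $d_{G'}^{s}(x)=(d_G^{s}(x))^m$. Because there is no counter, every position $j$ (not just those with $j\equiv 0\pmod L$) contributes a ``window'': regrouping the product of all $nL$ factors by diagonals --- collecting $\beta_1(q_j)$ on $X[j]$, $\beta_2(q_{j+1})$ on $X[j+1]$, \dots, $\beta_L(q_{j+L-1})$ on $X[j+L-1]$ into a single cumulative bet $\beta(q_j,X[j{:}j{+}L])$ --- makes $N(q,x)$ count \emph{sliding} occurrences of $x$ following state $q$. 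The rest of your log/KL/chain-rule argument then yields $H_L^{sliding}(X\restriction n)<s+\log|Q|/L$ directly, with no disjoint-to-sliding conversion needed. The absence of the counter is the whole point; adding one forces disjoint alignment and reintroduces exactly the difficulty you could not close.
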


\begin{proof}
	
	Let a finite state s-gale $d_{G}^{(s)}$ corresponding to a $\ell$-bet gambler $G = (Q,\delta,\beta,q_0)$ succeed on a sequence $X \in \Sigma^\infty$. Given $L = \ell m$, for $m \in \N$, consider the $L$-bet finite state gambler,
	$G'= (Q',\delta',\vec{\beta},q_0')$, whose components are :
	
	\begin{itemize}
		\item $Q' = Q$.
		\item For all $q \in Q'$  and $b \in \Sigma$,
		\[\delta'(q,b) = \delta(q,b).\]
		\item For all $q \in Q'$  and $1 \leq i \leq L$,
		\[\beta_{i} (q) = \beta_j(q). \]
		where $j = i/\ell$.
		\item $q_0' = q_0.$
	\end{itemize}
	
	It is easy to see that if the s-gale $d_{G}^{(s)}$ succeeds on $X$, then the $s$-gale $d_{G'}^{(s)}(x) = $ succeeds on $X$ as we have that $d_{G'}^{(s)}(x) = (d_{G}^{(s)}(x))^m$.
	
	Now for any $n \in \mathbb{N}$, 
		\begin{align*}
		d_{G'}^{(s)}(X \restriction {n}) =  \; \prod_{q \in Q} \prod_{x \in  \{0,1\}^L}  \{ {2^{s.L} \cdot \beta(q,x)} \} ^{N(q,x)}
	\end{align*}
	where $N(q,x)$ is the number of times $x$ occurs as the next $L$ symbols in $X \restriction n$ when $G'$ is at state $q$ and $$\beta(q,x) = \prod_{i=0}^{L-1} x[i] \cdot \beta_{i+1}(\delta^*(q,x[0:i-1])) + (1- x[i]) \cdot (1 - \beta_{i+1}(\delta^*(q,x[0:i-1])))$$ is the cumulative bet placed by $G'$ on the sliding block $x \in \Sigma^L$ at state $q$ in that scenario. 
	
	Taking logarithm,
	\begin{align*}
		\log (d_{G'}^{(s)}(X \restriction {n})) &= s\cdot L \cdot n +  \sum_{q\in Q}  \sum_{x \in  \{0,1\}^L} {N(q,x)} \; \log( \beta_i(q,x)) \\
		&= s\cdot L \cdot n + n  \sum_{q \in Q} \frac{N(q)}{n}  \sum_{x \in  \{0,1\}^L} \frac{N(q,x)} {N(q)} \; \log( \beta_i(q,x)) \\
		&= n \left( s \cdot L +  \sum_{q \in Q} P(q) \sum_{x \in  \{0,1\}^L} P(x|q) \; \log( \beta_i(q,x))\right).
	\end{align*}
	
	$d_{G_L}^{(s)}$ succeeds on $X$ only if for infinitely many $n \in \mathbb{N}$,
	
	\begin{align*}
		-  \sum_{q \in Q} P(q)  \sum_{x \in  \{0,1\}^L} P(x|q) \; \log( \beta(q,x)) < sL.
	\end{align*}
	
	Using the properties of KL divergence \cite{CoverThomas1991}, we can argue that
	
	\begin{align*}
		- \sum_{q \in Q} P(q) \sum_{x \in  \{0,1\}^L} P(x|q) \; \log( P(x|q)) < sL.
	\end{align*}

	Using the following result from information theory  \cite{CoverThomas1991}, $H(X) \leq H(X|S) + H(S)$, we have that for infinitely many $n \in \N$,
	\begin{align*}
		\sum_{x \in  \{0,1\}^L} P(x, X \restriction n) \; \log( P(x, X \restriction n)) &< sL + H(Q). 
	\end{align*}
	where $H(Q)$ is the entropy of occurrence of states which is atmost $\log(|Q|)$, where $ |Q|$ is the number of states in $G'$.

	Dividing by $L$, we have that for infinitely many $n \in \mathbb{N}$, 	
	\begin{align*}
		H_L^{sliding}(X \restriction {n})\ < \;   s + \frac{\log(|Q|)}{L}.
	\end{align*}
	
	Taking $\liminf$ over $n$,
	\begin{equation}\label{eq:HLandL}
		\liminf_{n \rightarrow \infty} H_L^{sliding}(X \restriction {k}) \leq \;   s + \frac{\log(|Q|)}{L}. 
	\end{equation}
	
	Taking infimum over all $L$,
	\begin{align*}
		H^{sliding}(X) = \inf_L \liminf_{k \rightarrow \infty} H_L^{sliding}(X \restriction {k}) \leq s.
	\end{align*}
	
\end{proof}

\subsection{Entropy rates and dimension}

 We have the following theorem that shows equivalence between multi-bet finite state dimension and sliding block entropy rates.
 
\begin{theorem} \label{thm:slidingeqvl}
	For all $X \in \Sigma^\infty$,
	\[\dimfs^{\mbet}(X) = H^{sliding}(X).\]
\end{theorem}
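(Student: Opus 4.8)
The plan is to derive Theorem~\ref{thm:slidingeqvl} directly from the two lemmas just established. Specifically, the preceding lemma (the "Betting on sliding entropy rates" lemma, stating $\dimfs^{\mbet}(X) \le H^{sliding}(X)$) gives one inequality, and Lemma~\ref{lem3.5} gives the reverse inequality $H^{sliding}(X) \le \dimfs^{\mbet}(X)$. Combining the two yields $\dimfs^{\mbet}(X) = H^{sliding}(X)$ for every $X \in \Sigma^\infty$. So the proof is essentially a one-line sandwich argument, and the real content has already been carried out in the two lemmas.

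First I would state that from the lemma preceding Lemma~\ref{lem3.2}'s application (the construction of the $\ell$-bet gambler $G_{\mathbb{P}}$ from the distribution $\mathbb{P}$ obtained via Lemma~\ref{lem3.2}), we have $\dimfs^{\mbet}(X) \le H^{sliding}(X)$: for every $s > H^{sliding}(X)$ there is an $\ell$-bet finite-state $s$-gale that succeeds on $X$, so $s \in \mathcal{G}_{FS}^{\mbet}(X)$, and taking the infimum gives the bound. Then I would invoke Lemma~\ref{lem3.5} for the other direction: any $\ell$-bet finite-state $s$-gale succeeding on $X$ forces $H^{sliding}_L(X) \le s + \log(|Q|)/L$ for every multiple $L = \ell m$, and taking $\liminf$ and then $\inf_L$ yields $H^{sliding}(X) \le s$; since this holds for every $s \in \mathcal{G}_{FS}^{\mbet}(X)$, we get $H^{sliding}(X) \le \dimfs^{\mbet}(X)$.

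The main obstacle, if any, is purely bookkeeping: one must make sure the two lemmas are each quantified over \emph{all} $X \in \Sigma^\infty$ (they are) and that the inequalities compose without a gap — i.e. that $\inf \mathcal{G}_{FS}^{\mbet}(X)$ genuinely equals $H^{sliding}(X)$ rather than differing by the usual $\inf$-versus-$\liminf$ subtleties. Since both bounds are stated as clean inequalities between the two real numbers $\dimfs^{\mbet}(X)$ and $H^{sliding}(X)$, no such gap arises. I would therefore write the proof as: "By Lemma~(the betting lemma), $\dimfs^{\mbet}(X) \le H^{sliding}(X)$. By Lemma~\ref{lem3.5}, $H^{sliding}(X) \le \dimfs^{\mbet}(X)$. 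Hence $\dimfs^{\mbet}(X) = H^{sliding}(X)$." The only care needed is to cite the correct lemma labels for the first inequality, since that lemma is unlabeled in the excerpt and would need a label added.

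\begin{proof}
	The inequality $\dimfs^{\mbet}(X) \le H^{sliding}(X)$ is established by the construction of the $\ell$-bet finite-state gambler $G_{\mathbb{P}}$ from the distribution given by Lemma~\ref{lem3.2}: for every $s > H^{sliding}(X)$ the induced $\ell$-bet finite-state $s$-gale succeeds on $X$, so $s \in \mathcal{G}_{FS}^{\mbet}(X)$, and taking the infimum over such $s$ yields the bound. The reverse inequality $H^{sliding}(X) \le \dimfs^{\mbet}(X)$ is Lemma~\ref{lem3.5}: any $\ell$-bet finite-state $s$-gale that succeeds on $X$ forces $H^{sliding}(X) \le s$ for every $s \in \mathcal{G}_{FS}^{\mbet}(X)$. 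Combining the two inequalities gives $\dimfs^{\mbet}(X) = H^{sliding}(X)$.
\end{proof}
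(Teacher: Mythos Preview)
Your proposal is correct and matches the paper's approach exactly: the paper states Theorem~\ref{thm:slidingeqvl} immediately after the two lemmas without an explicit proof, treating it as an obvious consequence of combining the inequality $\dimfs^{\mbet}(X) \le H^{sliding}(X)$ (the unlabeled ``betting'' lemma) with Lemma~\ref{lem3.5}. Your write-up simply spells out that sandwich argument.
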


\section{Equivalence between sliding and disjoint entropy rates}

\subsection{Disjoint entropy rates from $k$-bet gale win}
We now show that the disjoint block entropy rates of a sequence $X$ is less than or equal to $s$ if a $\ell$-product $s$-gale corresponding to $\ell$-bet FSG $G$ suceeds on it. The idea is that given a length $L \in \N$, we  stretch $G$ into $L$ copies and construct a extended $k$-bet gambler $G_L$. We analyse $G_L$ to show that $H^{sliding}_L(X) \leq s + c/L$ for a constant $c$.

\begin{lemma}
	\label{lem:DisjointEntropyMbetgales}
	
	For every $X \in \Sigma^\infty$,
	
	\[ H^{disjoint}(X)  \leq \dimfs^{\mbet}(X). \]
\end{lemma}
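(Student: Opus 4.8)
The plan is to mimic the proof of Lemma \ref{lem3.5}, since the only essential change is that disjoint blocks are read in an aligned fashion rather than a sliding one, and an $\ell$-bet gambler already places, in effect, a full $\ell$-length bet at every position. Suppose an $\ell$-bet finite state $s$-gale $d_G^{(s)}$ induced by a $\ell$-bet FSG $G = (Q,\delta,\vec{\beta},q_0)$ succeeds on $X$. Fix a length $L \in \N$; I would first reduce to the case $\ell \mid L$, say $L = \ell m$, exactly as in Lemma \ref{lem3.5} (if $\ell \nmid L$ one pads the construction with trivial $1/2$-bets, contributing only a bounded multiplicative constant, or one simply notes the analysis goes through with $L$ copies of the single-symbol bets in place of $m$ copies of the $\ell$-block bets). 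Construct the extended $L$-bet gambler $G_L$ with the same state set and transition function as $G$, whose $i$-th bet at state $q$ repeats the $(i \bmod \ell)$-th bet of $G$, so that $d_{G_L}^{(s)}(x) = \bigl(d_G^{(s)}(x)\bigr)^{m}$ and hence $G_L$ also succeeds on $X$.

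The next step is to expand $d_{G_L}^{(s)}(X \restriction kL)$ as a product over pairs $(q, x)$ with $q \in Q$ and $x \in \Sigma^L$, where now $N(q,x)$ counts the number of \emph{disjoint} $L$-blocks $x$ of $X \restriction kL$ that are read starting from state $q$. Taking logarithms and normalizing by $k$ gives
\[
\frac{1}{k}\log d_{G_L}^{(s)}(X \restriction kL) = sL + \sum_{q \in Q} P(q) \sum_{x \in \Sigma^L} P(x \mid q)\,\log \beta(q,x),
\]
where $\beta(q,x)$ is the cumulative bet $G_L$ places on the disjoint block $x$ at state $q$. Since $G_L$ succeeds, for infinitely many $k$ this quantity is positive, i.e. $-\sum_q P(q)\sum_x P(x\mid q)\log\beta(q,x) < sL$. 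By the Gibbs/KL-divergence inequality ($\sum_x P(x\mid q)\log \frac{P(x\mid q)}{\beta(q,x)} \ge 0$ for each fixed $q$), this yields $-\sum_q P(q)\sum_x P(x\mid q)\log P(x\mid q) < sL$, i.e. $H(X_{\text{block}} \mid S) < sL$ where $S$ is the state at the start of a block and $X_{\text{block}}$ is the disjoint block read. Then $H(X_{\text{block}}) \le H(X_{\text{block}}\mid S) + H(S) < sL + \log|Q|$, which after dividing by $L\log|\Sigma|$ (and relating the empirical disjoint-block distribution to the entropy term, as in the proof of Lemma \ref{lem3.6}) gives, for infinitely many $k$,
\[
H_L^{disjoint}(X \restriction kL) < s + \frac{\log|Q|}{L}.
\]
Taking $\liminf$ over $k$ and then $\inf$ over $L$ gives $H^{disjoint}(X) \le s$, and since $s > \dimfs^{\mbet}(X)$ was arbitrary, the lemma follows.

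The main obstacle, as in Lemma \ref{lem3.6}, is bookkeeping rather than any new idea: one must be careful that when $X \restriction kL$ is partitioned into $k$ consecutive disjoint $L$-blocks, the state $q$ at the start of each block together with the block content $x$ genuinely determines the multiplicative contribution $2^{sL}\beta(q,x)$ to the capital, and that the empirical frequencies $P(q)$, $P(x\mid q)$ sum correctly so that the entropy-of-states term is bounded by $\log|Q|$ uniformly in $k$. A secondary subtlety is the case $\ell \nmid L$, which must be handled so that the constant absorbed does not grow with $L$; padding with fair bets keeps it bounded (indeed zero in the exponent), so $\log|Q|/L \to 0$ as required. Combining this lemma with the already-established inequality $\dimfs^{\mbet}(X) \le H^{sliding}(X) \le H^{disjoint}(X)$ (the last inequality being immediate since every disjoint cover of blocks is a subset of the sliding count, or via Theorem \ref{thm:slidingeqvl} together with Lemma \ref{lem3.5}), one obtains $H^{disjoint}(X) = \dimfs^{\mbet}(X)$, which is the desired equivalence.
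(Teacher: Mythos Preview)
Your argument is correct and follows the same skeleton as the paper's: decompose the capital at $X\restriction kL$ as a product over disjoint $L$-blocks indexed by the starting state in $Q$, apply the Gibbs/KL inequality bet-by-bet, then bound $H(X_{\mathrm{block}})\le H(X_{\mathrm{block}}\mid S)+\log|Q|$. The only real difference is the auxiliary gambler. You mimic Lemma~\ref{lem3.5}: keep the state set $Q$ and inflate the number of bets to $L=\ell m$ by repetition (hence the divisibility hypothesis $\ell\mid L$, which you address by padding or by restricting to multiples of $\ell$, which suffices for the final infimum). The paper instead mimics Lemma~\ref{lem3.6}: it keeps $k$ bets and enlarges the state set to $Q\times[L]$ with a position counter marking where the gambler sits inside the current disjoint $L$-block, which works for every $L$ with no divisibility constraint. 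Both constructions are in fact inessential---one can run the block decomposition directly on $G$---so the two routes are interchangeable.

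One arithmetic slip worth fixing: with an $L$-bet gambler the per-symbol factor is $2^{sL}$, so in your displayed normalization $\tfrac{1}{k}\log d_{G_L}^{(s)}(X\restriction kL)$ the leading term should be $sL^2$ (or $s\ell L$ if you work with $G$ directly), not $sL$. The missing factor of $L$ reappears on the entropy side after the KL step, since $\log\beta(q,x)=\sum_{i=1}^{L}\log\beta_i(q,x)$ and each $\beta_i(q,\cdot)$ is a genuine distribution on $\Sigma^L$; it then cancels, leaving the bound $H_L^{disjoint}(X\restriction kL)<s+\log|Q|/L$ intact.
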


\begin{proof}
	
	Let a finite state s-gale $d_{G}^{(s)}$ corresponding to a $k$-bet gambler $G = (Q,\delta,\beta,q_0)$ succeed on a sequence $X \in \Sigma^\infty$. Given $L \in \N$, consider the extended $k$-bet finite state gambler,
	$G_L= (Q',\delta',\beta',q_0')$, whose components are
	
	\begin{itemize}
		\item $Q' = Q \times [L] $
		\item For all $(q,n) \in Q'$  and $b \in \Sigma$,
		\[
		\delta'((q,n),b) = 
		\begin{cases}
			(\delta(q,b),n+1)& \text {if } n < L-1\\
			(\delta(q,b),0) & \text {if } n = L-1\\
		\end{cases}
		\]
		\item For all $(q,n) \in Q'$  and $b \in \Sigma$ and $i \leq k$,
		\[
		\beta'_i((q,n)) = \beta_i(q).
		\]
		\item $q_0' = (q_0,0)$
	\end{itemize}
	
It is easy to see that if the s-gale $d_{G}^{(s)}$ succeeds on $X$, then the $s$-gale $d_{G_L}^{(s)}$ succeeds on $X$. 
	
	Now for any $m \in \mathbb{N}$, 
	\begin{align*}
		d_{G_L}^{(s)}(X \restriction {mL}) =  \prod_{q \in Q} \prod_{x \in  \{0,1\}^L} \prod_{i = 1}^k \{ {2^{s.L} \times \beta_i(q,x)} \} ^{N(q,x)}
	\end{align*}
	where $N(q,x)$ is the number of times $x$ has occurred as the input when $G_L$ was at state $(q,0)$. 	 
	$$\beta_i(q,x) = \prod_{j=0}^{L-1} x[j] \cdot \beta_{i+1}(\delta^*(q,x[0:i-1])) + (1 - x[j]) \cdot (1 - \beta_{i+1}(\delta^*(q,x[0:i-1]))) $$ is the $i^{th}$ cumulative bet placed by $G_L$ on $x$ in that scenario.
	
	Taking logarithm,
	\begin{align*}
		\log (d_{G_L}^{(s)}(X \restriction {mL})) &= s.k.m.L + \sum_{q\in Q}  \sum_{x \in  \{0,1\}^L} \sum_{i \in  1}^k {N(q,x)} \; \log( \beta_i(q,x)) \\
		&= s.k.m.L + m \sum_{q \in Q} \frac{N(q)}{m}  \sum_{x \in  \{0,1\}^L}  \sum_{i \in  1}^k \frac{N(q,x)} {N(q)} \; \log( \beta_i(q,x)) \\
		&= m \left( s.k.L +  \sum_{q \in Q} P(q) \sum_{x \in  \{0,1\}^L}  \sum_{i \in  1}^k P(x|q) \; \log( \beta_i(q,x))\right).
	\end{align*}
	$d_{G_L}^{(s)}$ succeeds on $X$ only if for infinitely many $m \in \mathbb{N}$,
	\begin{align*}
		-  \sum_{i \in  1}^k \sum_{q \in Q} P(q)  \sum_{x \in  \{0,1\}^L} P(x|q) \; \log( \beta_i(q,x)) < s.k.L.
	\end{align*}
	Using the properties of KL divergence \cite{CoverThomas1991}, we can argue that
	\begin{align*}
		- \sum_{q \in Q} P(q) \sum_{x \in  \{0,1\}^L} P(x|q) \; \log( P(x|q)) < sL.
	\end{align*}
	Using the following result from information theory  \cite{CoverThomas1991}, $H(X) \leq H(X|S) + H(S)$
	
	\begin{align*}
	\sum_{x \in  \{0,1\}^L} P(x) \; \log( P(x)) &< sL + H(Q). 
	\end{align*}
	where $H(Q)$ is the entropy of occurrence of states which is atmost log(n), where $n = |Q|$, the number of states in the original gambler $G$.

	Dividing by L, we have that for infinitely many $m \in \mathbb{N}$, 	
	\begin{align*}
		H_L(X \restriction {mL})\ < \;   s + \frac{\log(n)}{L}.
	\end{align*}
	
	Taking $\liminf$ over k,
	\begin{equation}\label{eq:HLandL}
		 \liminf_{k \rightarrow \infty} H_L(X \restriction {kL}) \leq \;   s + \frac{\log(n)}{L}. 
	\end{equation}
	
	Taking infimum over all $L$,
	\begin{align*}
		H^{disjoint}(X) = \inf_L \liminf_{k \rightarrow \infty} H_L(X \restriction {kL}) \leq s.
	\end{align*}
\end{proof}

Therefore we have the following theorem.

\begin{theorem} \label{thm:mbetDisjeqvl}
	For all $X \in \Sigma^\infty$,
	\[H^{disjoint} (X) = \dimfs^\mbet(X).\]
\end{theorem}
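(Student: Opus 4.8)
The plan is to combine Lemma~\ref{lem:DisjointEntropyMbetgales} with the already-established disjoint-block characterisation of ordinary finite state dimension (Theorem~\ref{thm:DisjEquiv}). Lemma~\ref{lem:DisjointEntropyMbetgales} gives one inequality outright, namely $H^{disjoint}(X) \le \dimfs^{\mbet}(X)$, so the whole theorem reduces to proving the reverse inequality $\dimfs^{\mbet}(X) \le H^{disjoint}(X)$.

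For the reverse inequality I would argue that multi-bet finite state $s$-gales subsume ordinary finite state $s$-gales, hence $\dimfs^{\mbet}(X) \le \dimfs(X)$, and then invoke Theorem~\ref{thm:DisjEquiv} to replace $\dimfs(X)$ by $H^{disjoint}(X)$. Concretely: an ordinary FSG $G = (Q,\delta,\beta,q_0,c_0)$ is the $k=1$ case of a $k$-bet FSG, with $\vec\beta = (\beta)$; unfolding the two recursions shows that $d_G^s(wb) = 2^s\, d_G^s(w)\,[(1-b)(1-\beta(w)) + b\,\beta(w)]$ in both definitions, so the induced $s$-product-gale of the $1$-bet FSG equals the induced $s$-gale of $G$ and the two have the same success set. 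Therefore $\mathcal{G}_{FS}(X) \subseteq \mathcal{G}_{FS}^{\mbet}(X)$, and taking infima gives $\dimfs^{\mbet}(X) = \inf \mathcal{G}_{FS}^{\mbet}(X) \le \inf \mathcal{G}_{FS}(X) = \dimfs(X)$. By Theorem~\ref{thm:DisjEquiv}, $\dimfs(X) = H^{disjoint}(X)$, which yields the reverse inequality. Equivalently, one can re-derive it directly: for $s > H^{disjoint}(X)$ take $\ell$ and $\mathbb{P}$ from Lemma~\ref{lem3.1} and use the single-bet gambler $G_{\mathbb{P}}$ of Lemma~\ref{lem3.3} --- which is already a legitimate $1$-bet FSG --- to witness $s \in \mathcal{G}_{FS}^{\mbet}(X)$.

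Chaining the two inequalities, $H^{disjoint}(X) \le \dimfs^{\mbet}(X) \le \dimfs(X) = H^{disjoint}(X)$, forces equality throughout, giving $\dimfs^{\mbet}(X) = H^{disjoint}(X)$. There is essentially no hard step here; the only thing to be careful about is the bookkeeping verification that the $k=1$ instance of a multi-bet FSG reproduces exactly the ordinary FSG recursion, so that no capital is gained or lost relative to the single-bet model --- and this is immediate from the definitions. It is worth noting that this route deliberately avoids using any relation between $H^{sliding}$ and $H^{disjoint}$, since the equality of those two quantities is meant to be extracted afterwards as a corollary of Theorems~\ref{thm:slidingeqvl} and~\ref{thm:mbetDisjeqvl}.
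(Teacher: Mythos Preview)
Your proposal is correct and follows essentially the same route as the paper: one inequality from Lemma~\ref{lem:DisjointEntropyMbetgales}, the other from the inclusion of ordinary FSGs among $1$-bet FSGs together with the disjoint-block characterisation of $\dimfs$. The only cosmetic difference is that the paper cites Lemma~\ref{lem3.3} (the inequality $\dimfs(X)\le H^{disjoint}(X)$) rather than the full Theorem~\ref{thm:DisjEquiv}, which you also mention as an alternative.
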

\begin{proof}
	From Lemma \ref{lem:DisjointEntropyMbetgales}, we have that $H^{disjoint} (X) \leq \dimfs^\mbet(X)$. From Lemma \ref{lem3.3}, we have $H^{disjoint} (X) \geq \dimfs(X)$. 
	
	Since by definition, $\dimfs(X) \geq \dimfs^\mbet(X)$, it follows that, \\ $H^{disjoint} (X) \geq \dimfs^\mbet(X)$.
\end{proof}

Therefore, we have a new characterisation of finite-state dimension using $k$-bet gales. 
The following theorem follows from Theorem \ref{thm:DisjEquiv} and Theorem \ref{thm:mbetDisjeqvl}.

\begin{theorem} \label{thm:MbetFsdEqvl}
	For all $X \in \Sigma^\infty$,
	\[\dimfs(X) = \dimfs^{\mbet}(X).\]
\end{theorem}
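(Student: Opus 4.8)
The plan is to obtain this equality purely by composing the two characterisations already established, since both $\dimfs(X)$ and $\dimfs^{\mbet}(X)$ have been pinned to the same quantity, the disjoint block entropy rate $H^{disjoint}(X)$. Concretely, I would invoke Theorem \ref{thm:DisjEquiv}, which gives $\dimfs(X) = H^{disjoint}(X)$, together with Theorem \ref{thm:mbetDisjeqvl}, which gives $H^{disjoint}(X) = \dimfs^{\mbet}(X)$; chaining these two identities yields $\dimfs(X) = \dimfs^{\mbet}(X)$ immediately, with no further estimation required.

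It is worth recording where the genuine content sits, so that the brevity of the final step is not misleading. The inequality $\dimfs^{\mbet}(X) \le \dimfs(X)$ is immediate from the definitions: an ordinary finite-state gambler is the $k=1$ case of a $k$-bet finite-state gambler, so $\mathcal{G}_{FS}(X) \subseteq \mathcal{G}_{FS}^{\mbet}(X)$ and the infimum can only decrease. The reverse inequality $\dimfs(X) \le \dimfs^{\mbet}(X)$ is the substantive direction, asserting that the extra power of placing $k$ simultaneous bets whose capital contributions are multiplied does not let a finite-state gambler succeed at any smaller rate $s$ than a single-bet gambler can. This is exactly what Lemma \ref{lem:DisjointEntropyMbetgales} supplies, through the automaton-stretching construction $G \mapsto G_L$ and the KL-divergence and conditioning argument bounding $H^{disjoint}(X)$ below any $s$ achieved by a $k$-bet $s$-gale, combined with Lemma \ref{lem3.3}, which converts a disjoint-block entropy bound back into an ordinary finite-state $s$-gale.

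Consequently, the only step in the proof of the theorem itself is the substitution, and I do not expect any real obstacle here; whatever difficulty there is has already been absorbed into Theorems \ref{thm:DisjEquiv} and \ref{thm:mbetDisjeqvl}. An alternative route would be to combine Theorem \ref{thm:slidingeqvl} with an independent proof that $H^{sliding}(X) = H^{disjoint}(X)$, but routing through the disjoint rate as above is the shortest path and avoids re-proving the sliding--disjoint equivalence.
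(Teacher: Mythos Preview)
Your proposal is correct and matches the paper's own approach exactly: the paper states that Theorem \ref{thm:MbetFsdEqvl} follows from Theorem \ref{thm:DisjEquiv} and Theorem \ref{thm:mbetDisjeqvl}, which is precisely the chaining you describe. Your additional commentary on where the substantive work lies (Lemma \ref{lem:DisjointEntropyMbetgales} and Lemma \ref{lem3.3}) is accurate and aligns with the paper's development.
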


Kozachinskiy, and Shen \cite{KozachinskiyShen2021} showed that the sliding and disjoint formulations of block entropy rates are equivalent. Note that they use the terminology aligned(disjoint) and non-aligned(sliding) entropy. This result also from Theorem \ref{thm:DisjEquiv}, Theorem \ref{thm:slidingeqvl} and Theorem \ref{thm:mbetDisjeqvl}.

\begin{theorem} [Kozachinskiy, and Shen \cite{KozachinskiyShen2021}]
	For all $X \in \Sigma^\infty$,
	\[H^{sliding}(X) = H^{disjoint}(X).\]
\end{theorem}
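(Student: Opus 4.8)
The plan is to derive the final identity $H^{sliding}(X) = H^{disjoint}(X)$ purely as a corollary of the chain of equivalences already established, without re-proving any entropy inequalities from scratch. Concretely, I would invoke Theorem \ref{thm:slidingeqvl}, which gives $\dimfs^{\mbet}(X) = H^{sliding}(X)$, and Theorem \ref{thm:mbetDisjeqvl}, which gives $\dimfs^{\mbet}(X) = H^{disjoint}(X)$. Both right-hand sides equal the same quantity $\dimfs^{\mbet}(X)$, so transitivity of equality immediately yields the claim. (One could equally route through Theorem \ref{thm:DisjEquiv} together with Theorem \ref{thm:MbetFsdEqvl}, but the shortest path is through the shared value $\dimfs^{\mbet}(X)$.)

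So the proof is essentially one line, and I would write it as follows.

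\begin{proof}
	By Theorem \ref{thm:slidingeqvl}, $H^{sliding}(X) = \dimfs^{\mbet}(X)$, and by Theorem \ref{thm:mbetDisjeqvl}, $\dimfs^{\mbet}(X) = H^{disjoint}(X)$. Combining these two equalities gives $H^{sliding}(X) = H^{disjoint}(X)$.
\end{proof}

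The only real content here was done upstream: the ``hard part'' is not this theorem but the two ingredients it rests on. Of those, the genuinely delicate step is the direction $\dimfs^{\mbet}(X) \le H^{sliding}(X)$ in Theorem \ref{thm:slidingeqvl}, where an $\ell$-bet finite-state gambler must simulate placing a cumulative bet of $\mathbb{P}(w)$ on every sliding occurrence of each block $w \in \Sigma^\ell$ — the subtlety being that consecutive sliding windows overlap, so the $\ell$ separate bet-tracks must be phase-shifted copies of the same block-betting strategy, and one must check that the boundary ``tail'' bets contribute only a bounded multiplicative constant $2^c$ to the capital. The other ingredient, $H^{disjoint}(X) \le \dimfs^{\mbet}(X)$ (Lemma \ref{lem:DisjointEntropyMbetgales}), relies on the information-theoretic inequalities $H(X) \le H(X \mid S) + H(S)$ and nonnegativity of KL divergence to absorb the state-entropy term $\log|Q|/L$ as $L \to \infty$. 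Given those, the equivalence of sliding and disjoint block entropy rates falls out for free, which is precisely the automata-theoretic reproof of Kozachinskiy and Shen \cite{KozachinskiyShen2021} that the paper advertises.
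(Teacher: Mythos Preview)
Your proposal is correct and matches the paper's own approach: the paper states (without a separate proof environment) that the result follows from Theorem \ref{thm:DisjEquiv}, Theorem \ref{thm:slidingeqvl}, and Theorem \ref{thm:mbetDisjeqvl}, and you extract the same corollary via the common value $\dimfs^{\mbet}(X)$. Your two-theorem route is in fact slightly tighter than the three references the paper lists, since Theorem \ref{thm:DisjEquiv} is redundant once Theorems \ref{thm:slidingeqvl} and \ref{thm:mbetDisjeqvl} are in hand.
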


Summarising, we now have the following equivalences.

\begin{theorem} 
	For all $X \in \Sigma^\infty$,
	\[H^{sliding}(X) = H^{disjoint}(X) = \dimfs(X) = \dimfs^{\mbet}(X).\]
\end{theorem}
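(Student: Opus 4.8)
The plan is to assemble the final theorem purely by chaining the equivalences already established, so no new mathematical content is needed — only a careful bookkeeping of which prior results supply which equalities. First I would invoke Theorem \ref{thm:DisjEquiv} (Bourke--Hitchcock--Vinodchandran) to get $\dimfs(X) = H^{disjoint}(X)$. Next I would invoke Theorem \ref{thm:mbetDisjeqvl} to get $H^{disjoint}(X) = \dimfs^{\mbet}(X)$. Finally I would invoke Theorem \ref{thm:slidingeqvl} to get $\dimfs^{\mbet}(X) = H^{sliding}(X)$. Stringing these three equalities together yields $H^{sliding}(X) = \dimfs^{\mbet}(X) = H^{disjoint}(X) = \dimfs(X)$, which is exactly the claimed statement (up to reordering the terms).

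Concretely, the proof is a one-paragraph argument. I would write: ``By Theorem \ref{thm:slidingeqvl}, $\dimfs^{\mbet}(X) = H^{sliding}(X)$. By Theorem \ref{thm:mbetDisjeqvl}, $\dimfs^{\mbet}(X) = H^{disjoint}(X)$. By Theorem \ref{thm:DisjEquiv}, $\dimfs(X) = H^{disjoint}(X)$. Combining these three identities gives $H^{sliding}(X) = H^{disjoint}(X) = \dimfs(X) = \dimfs^{\mbet}(X)$.'' One could alternatively route through Theorem \ref{thm:MbetFsdEqvl} for the $\dimfs(X) = \dimfs^{\mbet}(X)$ piece, but that theorem itself is derived from the same two ingredients, so it is cleaner to cite the source theorems directly.

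There is essentially no obstacle here; the only thing to be careful about is that each cited theorem is genuinely an unconditional equality (``for all $X \in \Sigma^\infty$'') rather than an inequality, which is indeed the case for Theorems \ref{thm:DisjEquiv}, \ref{thm:slidingeqvl}, and \ref{thm:mbetDisjeqvl}. Since equality is transitive, no side conditions propagate and the chain closes immediately. If I wanted to be maximally self-contained I would note that the underlying inequalities (Lemmas \ref{lem3.3}, \ref{lem3.6}, \ref{lem3.5}, \ref{lem:DisjointEntropyMbetgales}, and the betting lemmas of Section 5) together with the trivial monotonicity $\dimfs(X) \geq \dimfs^{\mbet}(X)$ already force all four quantities to coincide, but invoking the three packaged theorems is the shortest correct route.

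The ``hard part,'' such as it is, was done earlier: the genuinely new technical work is Theorem \ref{thm:mbetDisjeqvl} (disjoint block entropy rate bounds the multi-bet dimension, Lemma \ref{lem:DisjointEntropyMbetgales}) and Theorem \ref{thm:slidingeqvl} (sliding block entropy rate equals multi-bet dimension). Once those are in hand, the final theorem is a formal corollary and I would present it as such — a short proof block citing the three theorems and concluding by transitivity of equality.
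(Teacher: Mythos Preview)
Your proposal is correct and matches the paper's approach exactly: the paper presents this theorem as a summary (``Summarising, we now have the following equivalences'') without a separate proof block, treating it as an immediate corollary of Theorems \ref{thm:DisjEquiv}, \ref{thm:slidingeqvl}, and \ref{thm:mbetDisjeqvl}, precisely as you propose to chain them.
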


\section*{Acknowledgments}

The author thanks Satyadev Nandakumar and Subin Pulari for their insightful comments and helpful discussions. 
\bibliography{main}
\bibliographystyle{plain}

\end{document}